\newcommand{\longerversion}[1]{}
\newcommand{\longversion}[1]{}
\newcommand{\shortversion}[1]{#1}
\newcommand{\qed}{\ensuremath{\square}}
\newlength\shlength
\newcommand\xshlongvec[2][0]{\setlength\shlength{#1pt}%
  \stackengine{-5.6pt}{$#2$}{\smash{$\kern\shlength%
    \stackengine{7.55pt}{$\mathchar"017E$}%
      {\rule{\widthof{$#2$}}{.57pt}\kern.4pt}{O}{r}{F}{F}{L}\kern-\shlength$}}%
      {O}{c}{F}{T}{S}}
\def\etal{et~al.{}}
\newcommand{\FIX}[1]{{{#1}}}
\providecommand{\leftsquigarrow}{%
	\mathrel{\mathpalette\reflect@squig\relax}%
}
\newcommand{\reflect@squig}[2]{%
	\reflectbox{$\m@th#1\rightsquigarrow$}%
}
\DeclareRobustCommand\proves{\mathrel{|}\joinrel\mkern-.5mu\mathrel{-}}
\newcommand{\naf}{{\sim}}
\newcommand{\T}{\mathbf{T}}
\newcommand{\F}{\mathbf{F}}
\newcommand{\lcompl}[1]{\overline{#1}}  %
\newcommand{\Res}{\text{Res}}
\newcommand{\UnitPropagation}{\text{UnitPropagation}}
\newcommand{\dl}{\mathit{dl}}
\newcommand{\dlevel}{\mathit{dlevel}}
\newcommand{\atom}{\at}
\newcommand{\NogoodPropagation}{\text{NogoodPropagation}}
\newcommand{\ConflictAnalysis}{\text{ConflictAnalysis}}
\newcommand{\Select}{\text{Select}}
\newcommand{\compdef}{\text{Bdef}}
\newcommand{\UnfoundedSet}{\text{UnfoundedSet}}
\newcommand{\algonew}[1]{{\color{red}#1}}
\newcommand{\dummyvar}{\epsilon}
\newcommand{\proofsteploop}{\ensuremath{\mathsf{l}}}
\newcommand{\proofstepcomp}{\ensuremath{\mathsf{c}}}
\newcommand{\proofstepcompsup}{\ensuremath{\mathsf{s}}}
\newcommand{\proofstepunfound}{\ensuremath{\mathsf{u}}}
\newcommand{\proofstepadd}{\ensuremath{\mathsf{a}}}
\newcommand{\proofstepbody}{\ensuremath{\mathsf{b}}}
\newcommand{\proofstepext}{\ensuremath{\mathsf{e}}}
\newcommand{\proofstepdel}{\ensuremath{\mathsf{d}}}
\renewcommand{\neg}{{\sim}}
\newcommand{\seqconcat}{\ensuremath{\cdot}}  %
\newcommand{\msum}{\ensuremath{\uplus}}
\newcommand{\mdiff}{\ensuremath{\smallsetminus}}
\DeclareMathOperator{\wght}{wght}%
\DeclareMathOperator{\bnd}{bnd}%
\DeclareMathOperator{\optimize}{\leftsquigarrow}
\tikzstyle{tdnode} = [draw,rounded corners,top color=vertexTopColor,bottom color=vertexBottomColor,minimum size=1.5em]
\tikzstyle{stdnode} = [tdnode, font=\scriptsize]
\tikzstyle{stdnodecompact} = [stdnode, inner sep = 1.5pt, outer sep = 0.1pt]
\tikzstyle{stdnodetable} = [stdnode, inner sep = 0.5pt, outer sep = 0]
\tikzstyle{stdnodenum} = [minimum size=1.5em, font=\scriptsize]
\tikzstyle{tdedge} = [-,draw,thick]
\tikzstyle{tdlabel} = [draw=none, rectangle, fill=none, inner sep=0pt, font=\scriptsize]
\colorlet{vertexTopColor}{white}
\colorlet{vertexBottomColor}{black!10}
\newif\iflong
\newcommand{\todoi}[1]{\todo[inline]{#1}}
\newcommand{\future}[1]{}
\newcommand{\friedhof}[1]{}
\def\hy{\hbox{-}\nobreak\hskip0pt}
\newcommand{\ASPRUP}{ASP\hy RUP\xspace}
\newcommand{\ASPDRUPE}{ASP\hy DRUPE\xspace}
\newcommand{\ASPRAT}{ASP\hy RAT\xspace}
\newcommand{\SB}{\{}%
\newcommand{\SM}{\mid}%
\newcommand{\SE}{\}}%
\newcommand{\Card}[1]{\left|#1\right|}
\newcommand{\algorithmfootnote}[2][\footnotesize]{
  \let\old@algocf@finish\@algocf@finish
  \def\@algocf@finish{\old@algocf@finish
    \leavevmode\rlap{\begin{minipage}{\linewidth}
    #1#2
    \end{minipage}}
  }
}
\newcommand{\por}{\vee}
\newcommand{\eqdef}{\ensuremath{\,\mathrel{\mathop:}=}}
\newcommand{\hsep}{\leftarrow\,}
\newcommand{\at}{\text{\normalfont at}}
\newcommand{\bigO}[1]{\ensuremath{{\mathcal O}(#1)}}
\newcommand{\prog}{\ensuremath{P}}
\newcommand{\eb}{\mathsf{EB}}
\newcommand{\lop}{\mathsf{loop}}
\newcommand{\ebs}{\mathsf{IB}}
\newcommand{\ba}{\mathsf{IAss}}
\newcommand{\vm}{\mathsf{vm}}
\newcommand{\outn}{\mathsf{tv}}
\newcommand{\smof}{\mathsf{smod}}
\newcommand{\ext}{\mathsf{ext}}
\newcommand{\bodies}{\mathtt{Bod}}
\newcommand{\tf}{\T}
\newcommand{\ff}{\F}
\DeclareMathOperator{\progloops}{\lop} %
\newcommand{\Nat}{\mathbb{N}} %
\newtheorem{example}{Example}
\newtheorem{proposition}{Proposition}
\newtheorem{theorem}{Theorem}
\newtheorem{lemma}{Lemma}
\newtheorem{definition}{Definition}
\newtheorem{corollary}{Corollary}
{%
  \addtocounter{observation}{-1}
  \endgroup
}%
\title[Inconsistency Proofs for ASP: The \ASPDRUPE Format]{%
        Inconsistency Proofs for ASP:\\The \ASPDRUPE Format%
}%
\author[Alviano~\etal]
	{Mario Alviano\\
	University of Calabria, Italy\\
	\email{mario@alviano.net} 
	\and Carmine Dodaro\\
	University of Calabria, Italy\\
	\email{dodaro@mat.unical.it} 
	\and Johannes K. Fichte\\
	TU Dresden, Germany\\
	\email{johannes.fichte@tu-dresden.de} 
	\and Markus Hecher\\
	TU Wien, Austria\\
	\email{hecher@dbai.tuwien.ac.at} 
	\and Tobias Philipp\\
	secunet Security Networks AG, Germany\\\email{tobias.philipp@secunet.com} 
	\and Jakob Rath\\
	TU Wien, Austria\\
	\email{jakob.rath@tuwien.ac.at}}
\begin{document}
\maketitle

\begin{abstract}
  Answer Set Programming (ASP) solvers are highly-tuned and complex procedures that implicitly
  solve the consistency problem, i.e., deciding whether a logic program
  admits an answer set.
  Verifying whether a claimed answer set is formally a correct answer
  set of the program can be decided in polynomial time for (normal)
  programs. However, it is far from immediate to verify whether a
  program that is claimed to be inconsistent, indeed does not admit
  any answer sets.  In this paper, we address this problem and develop
  the new proof format \ASPDRUPE for propositional, disjunctive logic programs,
  including weight and choice rules.  \ASPDRUPE is based on the Reverse Unit Propagation (RUP)
  format designed for Boolean satisfiability.  We establish
  correctness of \ASPDRUPE and discuss how to integrate it into modern
  ASP solvers. Later, we provide an implementation of \ASPDRUPE 
  into the wasp solver for normal logic programs.
\end{abstract}

\begin{keywords}
ASP, RUP proofs, inconsistency proofs
\end{keywords}

\section{Introduction}

Answer Set Programming (ASP)~\cite{BrewkaEiterTruszczynski11} is a
logic-based declarative modeling language and problem solving
framework~\cite{GebserEtAl12} for hard computational problems and an
active research area in \FIX{artificial intelligence (AI) and knowledge representation and reasoning}. 
It has been applied
both in
academia~\cite{BalducciniGelfondNogueira06a,GebserEtAl10,GebserSchaubThieleVeber11}
and industry~\cite{Gebser2011a,GuziolowskiEtAl13a,RiccaEtAl12}.
In propositional ASP questions are encoded by atoms combined into
rules and constraints which form a logic program. Solutions to the
program consist of sets of atoms called answer sets;
\FIX{if no solutions exist then the program is \emph{inconsistent}}.

\FIX{Knowledge representation languages like ASP are usually considered explainable AI, as they are based on deduction, which is an explainable procedure.
For example, we can easily explain answer sets of a normal logic program in terms of program reducts and fixpoint operators~\cite{LiuEtAl10}.
In this case, we may argue that answer sets are self-explanatory, and therefore ASP systems providing answer sets are explainable AI systems.
However, modern ASP systems do not provide any explanation for inconsistent programs;
there is no witness that can be checked or evidence of the correctness of the refutation of the input program.
Hence, even if inconsistency of logic programs is anyhow explainable with mathematical rigour, ASP systems are essentially black-boxes in this case, and just report the absence of answer sets.
We believe that adding inconsistency proofs in ASP systems is important to make them explainable for inconsistent programs, but also provides auditability for consistent programs. 
Thanks to a duality result in the literature~\cite{Pearce99}, such inconsistency proofs for ASP can be also used as a certificate for the validity of some formulas of intuitionistic logic and other intermediate logics.
A further application of these inconsistency proofs is query answering in ASP, which is usually achieved by inconsistency checks.
There, the goal is to provide a witness for cautiously true answers of a given query.
}

Modern ASP solvers have been highly influenced by SAT solvers, which
solve the Boolean satisfiability problem and are often based on
conflict-driven clause learning~\cite{MarquesSilvaSakallah99}.
Typically, ASP solvers aim for computing an answer set of a given
program, and therefore solve the \emph{consistency problem} that asks whether a given program has an
answer set. This problem is on the second level of the polynomial
hierarchy when allowing arbitrary propositional disjunctive programs
as input and on the first level when restricting to disjunction-free programs~\cite{Truszczynski11}.
As already stated, while consistency of a program can be easily verified given such a
computed answer set, verifying whether a solver correctly outputted
that a program is inconsistent, is not immediate.
Given that ASP solvers are also used for critical
applications~\cite{GebserEtAl18a,HaubeltNeubauerSchaub18a}, their correctness is of utter importance.

When looking at SAT solvers, various techniques have been developed to
ensure correctness of unsatisfiability, such as clausal proof
variants~\cite{Gelder08,GoldbergNovikov03} based on clauses that have
\emph{RUP (reverse unit propagation)} and \emph{RAT (Resolution
  Asymmetric Tautology)} property.
These proof formats share verifiability in polynomial time in the size
of the proof and input formula and can be tightly coupled with modern
solving techniques. A solver outputs such a proof during solving. Thereby, the correctness of solving can be verified by a
relatively simple method for every input instance.
While there are variants of these proofs
for various problems, such as extensions to verify the
validity of quantified Boolean
formulas~\cite{HeuleHuntWetzler13,WetzlerHeuleHunt14}
(QRAT~\cite{HeuleSeidlBiere14} and QRAT+~\cite{LonsingEgly18}), to our
knowledge such a format is not yet available for verifying ASP
solvers. 
\FIX{
One approach to certify inconsistency of a given normal program is to
translate the program into a SAT formula in polynomial
time~\cite{LinZhao03,Janhunen06} and obtain
a proof from a SAT solver, e.g., via a RAT-based proof format,
to verify that indeed the program is inconsistent. 
Unfortunately, this approach does not take the techniques into account
that are employed by state-of-the-art ASP solvers and therefore seem
to lack efficiency and scalability.
Further, this is still not a suitable technique for disjunctive programs,
nor to verify whether
internally the ASP solver is able to \emph{correctly} explain the obtained result.}

We follow this line of research and establish the following
novel results for ASP:
\begin{enumerate}
\item We present the proof format \ASPDRUPE based on RUP for logic programs given in SModels~\cite{lparse} 
\FIX{or ASPIF~\cite{GebserEtAl16} (restricted to ASP without theory reasoning)} input format \FIX{including disjunctive programs}
and show its
  correctness. %
\item We provide an algorithm for verifying that a given solving trace
  in the \ASPDRUPE format is indeed a valid proof for inconsistency of the
  input program. \FIX{This algorithm works in polynomial time in the size of the given solving trace.}
\item We illustrate on an abstract ASP solving algorithm how one can
  integrate \ASPDRUPE into state-of-the-art ASP solvers like
  clasp~\cite{GebserEtAl12} and wasp~\cite{AlvianoEtAl15}.
\item \FIX{We provide an implementation in a variant of wasp, where \ASPDRUPE is integrated for normal ASP. 
This variant of wasp is able to not only explain inconsistency for inconsistent logic programs, but also provides auditability in case of consistency for verifying whether the provided answer set was indeed correctly obtained.}
\end{enumerate}

\paragraph{Related Work.}
Heule~\etal~\cite{HeuleHuntWetzler13} presented a proof format based
on the RAT property and subsequently a program to validate solving
traces in this format~\cite{WetzlerHeuleHunt14}.
Extended resolution allows to polynomially simulate the DRAT
format~\cite{KieslEtAl18} and vice-versa~\cite{WetzlerHeuleHunt14}.
Many advanced techniques, such as \emph{XOR
  reasoning}~\cite{PhilippRebola16} as well as
\emph{symmetry breaking}~\cite{HeuleHuntWetzler15} can be
expressed in DRAT and efficient, verified checkers based on RAT have
been developed~\cite{CruzFilipeEtAl17}.
Further, RAT is also available for QBF~\cite{HeuleSeidlBiere14} and
has been extended to cover a more powerful redundancy
property~\cite{LonsingEgly18}.

\section{Preliminaries}
\subsection{Answer Set Programming (ASP)}
We follow standard definitions of propositional ASP~\cite{BrewkaEiterTruszczynski11\longversion{,JanhunenNiemela16a}}
and use rules defined by the SModels~\cite{lparse} \FIX{or ASPIF~\cite{GebserEtAl16} (restricted to ASP without theory reasoning)} input
format, which is widely supported by modern ASP solvers.
In particular, let $\ell$, $m$, $n$ be non-negative integers such that
$1 \leq \ell \leq m \leq n$, $a_1$, $\ldots$, $a_n$ %
propositional
atoms, %
$w$, $w_1$, $\ldots$, $w_n$ non-negative integers.
A \emph{choice rule} is an expression of the form
$\{a_1; \ldots; a_\ell \} \hsep a_{\ell+1}, \ldots, a_m, \neg a_{m+1},
\ldots, \neg a_n$,
a \emph{disjunctive rule} is of the form
$a_1\por \cdots \por a_\ell \hsep a_{\ell+1}, \ldots, a_{m}, \neg
a_{m+1},$ $\ldots, \neg a_n$, and
a \emph{weight rule} is of the form
$a_{\ell} \hsep w \leqslant \{ a_{\ell + 1} = w_{\ell + 1}, \ldots,
a_m = w_m,%
\, \neg a_{m+1} = w_{m+1}, \ldots, \neg a_n = w_n \}$, where~$\ell=1$.
A \emph{rule} is either a disjunctive, a choice, or a weight rule. %
A \emph{\FIX{(disjunctive logic)} program}~$\prog$ is a {finite} set of rules.
For a rule~$r$, we let $H_r \eqdef \{a_1, \ldots, a_\ell\}$,
$B^+_r \eqdef \{a_{\ell+1}, \ldots, a_{m}\}$,
$B^-_r \eqdef \{a_{m+1}, \ldots, a_n\}$, and
$B_r \eqdef B^+_r \cup \{\neg a \mid a \in B^-_r\}$ is a set of
\emph{literals}, i.e., an atom or the negation thereof.
We denote the sets of \emph{atoms} occurring in a rule~$r$ or in a
program~$\prog$ by $\at(r) \eqdef H_r \cup B^+_r \cup B^-_r$ and
$\at(\prog)\eqdef \bigcup_{r\in\prog} \at(r)$.
\FIX{For a weight rule~$r$, let $\wght(r,l)$ map literal~$l$ %
to its weight~$w_i$ in rule~$r$ if $l=a_i$ for
$\ell+1 \leq i \leq m$, or if~$l=\neg a_i$ for~$m+1\leq i\leq n$, 
and to $0$ otherwise. Further, let
$\wght(r,L) \eqdef \Sigma_{l \in L} \wght(r,l)$ for a set~$L$ of literals
and let $\bnd(r)\eqdef w$ be its \emph{bound}.}
A \emph{normal} (logic) program~$\prog$
 is a disjunctive program~$\prog$
 with $\Card{H_r} \leq 1$ for every~$r \in \prog$.
 The \emph{positive dependency digraph}~$D_\prog$ of
 $\prog$ is the digraph defined on the set~$\SB a \SM a \in H_r \cup
 B^+_r, r \in P \}$ of atoms, where for every rule~$r \in
 \prog$ two atoms $a\in B^+_r$ and~$b\in
 H_r$ are joined by an
 edge~$(a,b)$.  We denote the set of all cycles (loops)
 in~$D_\prog$ by~$\lop(\prog)$. A
 program~$\prog$ is called~\emph{tight},
 if~$\lop(\prog) = \emptyset$. 
 \FIX{While we allow programs with loops that might also involve atoms of weight rules,
 we consider weight rules only as a compact representation of a set of normal rules, similar to the definition of stable models in related work~\cite{BomansonEtAl16}.
 In other words, we do not consider advanced semantics concerning recursive weight rules (recursive aggregates).
 In case of solvers with different semantics, one can restrict the input language to disregard recursive weight rules, 
 which is also in accordance with the latest ASP-Core-2.03c standard~\cite{aspcore2}.
 This restriction is motivated by a lack of consensus on the interpretation 
 of recursive weight rules~\cite{Ferraris11,FaberPfeiferLeone11,GelfondZhang14,PelovDeneckerBruynooghe07,SonPontelli07}.
}
\subsection{Solving Logic Programs}
Let~$P$ be a given program, $r\in\prog$ be a rule, and~$a\in\at(\prog)$. %
We define the \emph{set~$\ebs(r,a)$ of induced
  bodies} with~$a$ in the head %
by the singleton $\{B_r\mid a\in H_r\}$ if
$r$ is a choice rule, by~$\{B_r \cup \{\neg b \mid b \in H_r\setminus\{a\}\} \mid a\in H_r\}$ if~$r$ is a disjunctive rule, and by the union over~$\SB \{A\cap
B^+_r\SE \cup \{\neg b\SM b\in A\cap
B_r^-\}\mid a\in H_r\SE$ for every (subset-minimal)
set~$L$ of literals such that $\wght(r,L)\geq \bnd(r)$,
if~$r$ is a weight rule. %
This allows us to define~$\ebs(\prog,a)\eqdef \bigcup_{r\in\prog,a\in H_r}\ebs(r,a)$, and $\bodies(\prog)\eqdef
\bigcup_{a\in \at(\prog)}\ebs(\prog,a)$.
A \emph{variable assignment} is
either~$\tf X$ or~$\ff X$, where
\emph{variable}~$X$ is either an atom, or an induced body, or a
\emph{fresh atom} that does not occur
in~$\at(\prog)$.  For a variable assignment~$l$,
$\lcompl{l}$ is the \emph{complementary variable assignment} of
$l$, i.e., $\lcompl{l} \eqdef \F X$ if $l=\T X$ and $\lcompl{l} \eqdef
\T X$ if $l=\F X$. %
An \emph{assignment}~$A$ is a set of variable assignments, where~$A^\T\eqdef \SB X \SM \T X \in A\SE$, $A^\F\eqdef \SB X \SM \F X \in A\SE$, and $\lcompl{A}\eqdef\{\lcompl{l}\mid l\in A\}$ such that~$A^\T\cap A^\F=\emptyset$. %
For a set~$B$ of literals, we define the \emph{induced assignment}~$\ba(\prog, B)\eqdef \SB \T a \SM a \in \at(\prog) \cap B\SE \cup \SB \F a \SM \neg a\in B\SE$.
A \emph{nogood}~$\delta$ is an assignment, %
which is not allowed, where~$\Box\eqdef \emptyset$ refers to the empty nogood. %
Given a set~$\Delta$ of nogoods. 
We define the least fixpoint~$\Delta_{\proves_1}$ of \emph{unit propagated nogoods} by the fixpoint computation~$\Delta_{\proves_1}^0\eqdef \Delta$, and~$\Delta_{\proves_1}^i\eqdef \Delta_{\proves_1}^{i-1} \cup \{\delta \setminus\{l\} \mid \delta\in\Delta_{\proves_1}^{i-1}, l\in\delta, \{\lcompl{l}\} \in \Delta_{\proves_1}^{i-1}\}$ for~$i\geq 1$.
Nogood~$\delta$ is a \emph{consequence using unit propagation (UP)} of set~$\Delta$, denoted by~$\Delta\proves_1\delta$, if~$\delta\in\Delta_{\proves_1}$.
An assignment~$A$ \emph{satisfies} a set~$\Delta$ of nogoods (written~$A\models \Delta$)
if for every~$\delta\in \Delta$, we have~${\delta}\not\subseteq A$.
Set~$\Gamma$ of nogoods is a \emph{consequence} of a set~$\Delta$ of nogoods (denoted by $\Delta \models \Gamma$) if every assignment, which contains a variable assignment for all variables in~$\Delta\cup\Gamma$, that satisfies~$\Delta$ also satisfies~$\Gamma$.  
The
set~$\Delta_\prog$ of \emph{completion
  nogoods~\cite{Clark77,GebserEtAl12}}
is %
defined by~$\Delta_\prog\eqdef\Delta_\prog^{\compdef} \cup \Delta_\prog^\rightarrow \cup \Delta_\prog^\leftarrow$, where
\begin{align*}
  \Delta_\prog^{\compdef} \eqdef& \bigcup_{B\in\bodies(\prog)} \Big\SB \{\ff B\} \cup \ba(\prog,B)\Big\SE \cup \Big\SB\{\tf B, \lcompl{l}\} \mid l \in {\ba(\prog,B)}\Big\SE \; \\ %
  \Delta_\prog^\rightarrow\eqdef
                     &\bigcup_{a\in\at(\prog)} \Big\SB\{\tf a\} \cup \{\ff B \SM B \in \ebs(P,a) \}\Big\SE,
		     \Delta_\prog^\leftarrow\eqdef
                       \hspace{-2em}\bigcup_{\substack{\text{non-choice rule } r\in \prog,\\ a\in H_r%
                       }}  \Big\SB \{\ff a, \T B \} \SM B \in \ebs(r,a) \Big\SE.
\end{align*}
\FIX{Note that in practice, current ASP solvers do not fully compute~$\Delta_\prog$.
Instead, these solvers partially compute $\Delta_\prog$ and %
add relevant nogoods lazily during solving~\cite{AlvianoDodaroMaratea18}.}

Then, if $\prog$ is tight the set~$A^\T \cap \at(\prog)$ is an
\emph{answer-set} if and only if there is a satisfying assignment~$A$
for~$\Delta_\prog$~\cite{Fages94,GebserEtAl12}.
The set~$\eb(\prog,U)$ of \emph{external bodies} of
program~$\prog$ and
set~$U\subseteq\at(\prog)$ of atoms are given by~$\eb(\prog,U)\eqdef
\SB B \SM B\in\ebs(\prog,a), B \cap U = \emptyset, a\in U\SE$~\cite{GebserEtAl12}.
We define the \emph{loop nogood}~$\lambda(a,U)$ for an atom~$a \in
U$ on a loop~$U \in \lop(\prog)$ by $\lambda(a,U) \eqdef \SB \tf a \SE
\cup \SB \ff B \SM B \in \eb(\prog,U) \SE$.
For a logic program $P$, the set~$A^\T \cap
\at(\prog)$ is an \emph{answer set} if and only if there is a
satisfying assignment~$A$ for~$\Delta_\prog\cup
\Lambda_P$, where $\Lambda_P\eqdef \SB \lambda(a,U) \SM
a\in U, U\in\lop(\prog)\SE$~\cite{LinZhao03,Faber05}.

\section{\ASPDRUPE: RUP-like Format for Proof Logging}
Inspired by RUP-style unsatisfiability proofs in the field of Boolean
satisfiability solving~\cite{GoldbergNovikov03}, we aim for a proof of
inconsistency of a program. 
Since modern ASP solvers use \emph{Clark's completion}~\cite{Clark77} to
transform a program into a set of nogoods, we do so as well. 
Our aimed proof then has the following features:
\begin{enumerate}
\item \emph{Existence of a simple verification algorithm.} In order to
  increase confidence in the correctness of results, the algorithm
  that verifies the proof has to be fairly easy to understand and to
  implement.
\item \emph{Low complexity.} The proof is verifiable in polynomial
  time in its length and the size of the completion nogoods.
\item \emph{Integrability into solvers that employ Conflict-Driven
    Nogood Learning (CDNL).} The proof can stepwise be outputted
  during solving with minimal impact on the solving algorithm and
  hence the solver.
\end{enumerate}

\noindent
The method works as follows: We run the solver on the set~$\Delta_\prog$ of completion nogoods for given input
program~$P$. The solver outputs either an answer set or that $P$ has
no answer set and a proof~$\Pi$.  We pass $P$ together with $\Pi$ to
the verifier in order to validate whether the solver's assessment is
in accordance with its~outputted~proof.

\subsection{The Proof Format for Logic Programs}
The basic idea of clausal proofs for SAT is the following: One starts
with the input formula in CNF (given as a set of clauses). Every step
of the proof denotes an addition or deletion of a clause to/from the
set of clauses.  For additions, the condition is to only add clauses
that are a logical consequence of the current set of clauses and that it
can be checked easily,~e.g., use only unit propagation.

For our format \ASPDRUPE, we consider Clark's completion $\Delta_P$ as
the initial set of nogoods corresponding to the input program $P$.
Besides addition and deletion of nogoods, we need proof steps that
model how the solver excludes unfounded sets (loops).

\newcommand{\progex}{P}
\newcommand{\rsep}{;\;}
\begin{example}
    \label{ex:program}
    Consider program $\progex = \SB %
    a \leftarrow b, d \rsep %
    b \leftarrow a, d \rsep %
    a \leftarrow c\rsep %
    b \leftarrow c, d \rsep %
    c \leftarrow \naf d \rsep %
    d \leftarrow \naf c \rsep %
    e \leftarrow c, \naf e \rsep %
    e \leftarrow \naf a, \naf e \SE$, which is inconsistent. 
    $P$ contains only the positive loop $L = \{a,b\}$, whose external
    support is given by the
    set~$\SB a \leftarrow c\rsep b \leftarrow c,d \SE$ of rules, and
    thus $\eb(\progex, L) = \{ \{c\}, \{c,d\} \}$.  Set~$L$ induces
    two possible loop nogoods,
    $\lambda(a,L) = \{ \tf a, \ff \{c\}, \ff \{c,d\} \}$ and
    $\lambda(b,L) = \{ \tf b, \ff \{c\}, \ff \{c,d\} \}$.
\end{example}

We describe the
proof format \ASPDRUPE for logic programs and %
adapt the RUP property~\cite{GoldbergNovikov03} to nogoods as follows.

\begin{definition}[nogood RUP]
  Let $\Delta$ be a set of nogoods. Then, a nogood~$\delta$ is
  \emph{RUP} (reverse unit propagable) \emph{for~$\Delta$} if~$\Delta \cup \{\{\lcompl{l}\}\mid l\in\delta\} \proves_1 \Box$, i.e., we can derive~$\Box$ using only unit propagation.
\end{definition}
A \emph{proof step} is a triple $(t, \delta, a)$, where
$t \in \{ \proofstepadd, \proofstepcomp, \proofstepcompsup, \proofstepext, \proofstepdel, \proofsteploop \}$
denotes the \emph{type} of the step, $\delta$ is an assignment,
and $a$ is an atom or~$\dummyvar$. %
The
type~$t \in \{ \proofstepadd, \proofstepcomp, \proofstepcompsup, \proofstepext, \proofstepdel, \proofsteploop \}$
indicates whether the step is an
addition~(\proofstepadd), 
a completion rule addition~(\proofstepcomp),
a completion support addition~(\proofstepcompsup),
an extension~(\proofstepext), a
deletion~(\proofstepdel), or a loop addition~(\proofsteploop). %
A \emph{proof sequence} for a logic program $P$ is a finite sequence
$\Pi\eqdef\langle \sigma_1, \ldots, \sigma_n \rangle$ of proof steps.
Initially, a proof sequence gets associated with a
set~$\nabla_0(\Pi) \eqdef \Delta_P^{\compdef}$ of nogoods. 
Note that although the set $\Delta_P^{\compdef}$ might be exponential
in the size of the program~$\prog$, body definitions
for %
body variables~$B\in\bodies(\prog)$ that do not occur in the proof are never materialized.
Then, each proof
step~$\sigma_i$ for~$1\leq i\leq n$ subsequently
transforms~$\nabla_{i-1}(\Pi)$ into the induced set~$\nabla_i(\Pi)$ of nogoods,
formally defined below.
An \emph{\ASPDRUPE derivation} is a proof sequence that allows for
RUP-like rules for ASP and includes both deletion and extension. In an
\ASPDRUPE derivation each step~$\sigma=(t, \delta, a)$ has to satisfy
a condition depending on its type as follows:
\begin{enumerate}
\item An \emph{addition}~$\sigma = (\proofstepadd, \delta, \dummyvar)$ inserts
  a nogood~$\delta$ that is RUP for~$\nabla_{i-1}(\Pi)$.
\item A \emph{completion rule addition}~$\sigma=(\proofstepcomp, \delta, \dummyvar)$ inserts a nogood~$\delta\in\Delta_\prog^\leftarrow$.
\item A \emph{completion support addition}~$\sigma=(\proofstepcompsup, \{\F B_1, \ldots, \F B_k\}, a)$ inserts a nogood~$\{\T a, \F B_1,\ldots,$ $\F B_k\}\in\Delta_\prog^{\rightarrow}$ if~$\{B_1,\ldots,B_k\}=\ebs(\prog,a)$.
\item An \emph{extension}~$\sigma = (\proofstepext, \delta, a)$
  introduces a definition that renders nogood~$\delta$ equivalent to a
  fresh atom~$a$, i.e., $a$ does not appear in~$\bigcup_{j=0}^{j=i-1}\nabla_{j}(\Pi)\cup\Delta_\prog$.
  Formally, this rule represents the set
  $\ext(a,\delta)\eqdef \big\SB \delta \cup \{\F a\}\big\SE \cup
  \big\SB\{\T a, \lcompl{l}\} \SM l \in \delta \big\SE$ of
  \emph{extension nogoods}.
\item A \emph{deletion}~$\sigma = (\proofstepdel, \delta, \dummyvar)$
  represents the deletion of $\delta$ from~$\nabla_{i-1}(\Pi)$.
\item A \emph{loop
    addition\footnote{%
    There could be an exponential number of external bodies involving weight rules. 
    However, both clasp and wasp treat weight rules differently~\cite{AlvianoEtAl15}. 
    Alternatively, one could easily modify the loop addition type to list also involved external bodies (as in the completion support addition type), which we did not for the sake of readability.%
    }}~$\sigma=(\proofsteploop, \{ \T a_1, \ldots, \T a_k \},
  a_1)$ inserts a loop nogood $\lambda(a_1, L)$ for a loop
  $L= \{ a_1, \ldots, a_k \} \in \lop(P)$. %
\end{enumerate}
Given an \ASPDRUPE
derivation~$\Pi = \langle \sigma_1, \ldots,\sigma_n\rangle$, we define
the \emph{set~$\nabla_i(\Pi)$ of nogoods induced by step~$i$} as the
result of applying proof
steps~$\langle\sigma_1, \ldots, \sigma_i\rangle$ to the initial
completion body definitions~$\Delta_P^{\compdef}$ for~$0\leq i\leq n$. For our inductive
definition in the following, we use multiset semantics for additions
and deletions of nogoods, and write $\msum$ for the multiset sum.
\begin{align*}
  \nabla_0(\Pi) &\coloneqq \Delta_P^{\compdef} \\
  \nabla_{i}(\Pi) &\coloneqq
               \begin{cases}
                 \nabla_{i-1}(\Pi) \msum \{ \delta \}, & \text{if
                   $\sigma_i = (\proofstepadd, \delta, \dummyvar)$,}
                 \\
		\nabla_{i-1}(\Pi) \msum \{ \delta \}, & \text{if
                   $\sigma_i = (\proofstepcomp, \delta, \dummyvar)$,}
                 \\
		\nabla_{i-1}(\Pi) \msum \{ \delta\cup\{\T a\}\}, & \text{if
                   $\sigma_i = (\proofstepcompsup, \delta, a)$,}
                 \\
                 \nabla_{i-1}(\Pi) \msum \ext(a,\delta), & \text{if
                   $\sigma_i = (\proofstepext, \delta, a)$,}
                 \\
                 \nabla_{i-1}(\Pi) \mdiff \{ \delta \}, & \text{if
                   $\sigma_i = (\proofstepdel, \delta, \dummyvar)$,}
                 \\
                 \nabla_{i-1}(\Pi) \msum \{ \lambda(a_1, \{a_1, \ldots,
                 a_k\}) \}, & \text{if
                   $\sigma_i = (\proofsteploop, \{\T a_1, \ldots, \T
                   a_k\}, a_1)$.}
    \end{cases}
\end{align*}
Then, we say that an \ASPDRUPE derivation~$\Pi$ is an \emph{\ASPDRUPE proof}
for the inconsistency of~$P$ if it actually derives inconsistency
for~$\prog$, formally, $\Box \in \nabla_n(\Pi)$.
\FIX{Note that $\Delta_P^{\compdef}$ might be exponential in the input program size,
in the worst case. However, there is no need to materialize the 
set~$\Delta_P^{\compdef}$, %
as, intuitively, this set of body definitions %
only ensures that every induced body has a reserved
auxiliary atom that can be used to ``address''
the body in a compact way.
In an actual implementation of a solver that uses \ASPDRUPE, %
one needs to specify these used auxiliary atoms anyway, cf. Section~\ref{sec:implspec},
where implementational specifications of \ASPDRUPE are described.
}

\begin{example}
    \label{ex:proof}
    Consider program~$\progex$ from Example~\ref{ex:program} and
    loop~$L=\{a,b\}$, which induces loop nogood
    $\lambda(a,L) = \{ \tf a, \ff \{c\}, \ff \{c,d\} \}$.
    Then, the proof sequence~$\Pi=\langle \sigma_1,\ldots,\sigma_{18}\rangle$ is an \ASPDRUPE proof for the inconsistency of $\progex$ with
    \begin{align*}
        \sigma_1 & = \FIX{ ( \proofstepcompsup, \{ \F \{b,d\}, \F \{c\} \}, a ) }
                 & \nabla_1(\Pi) & = \Delta_{\progex}^{\compdef} \msum \{ \{ \T a, \F \{b,d\}, \F \{c\} \} \}
                 \\
        \sigma_2 & = ( \proofstepcomp, \{ \F c, \T \{\naf d\} \}, \dummyvar )
                 & \nabla_2(\Pi) & = \nabla_1(\Pi) \msum \{ \{ \F c, \T \{\naf d\} \} \}
                 \\
        \sigma_3 & = \FIX{ ( \proofstepcomp, \{ \F d, \T \{\naf c\} \}, \dummyvar ) }
                 & \nabla_3(\Pi) & = \nabla_2(\Pi) \msum \{ \{ \F d, \T \{\naf c\} \} \}
                 \\
        \sigma_4 & = ( \proofstepcomp, \{ \F b, \T \{a,d\} \}, \dummyvar )
                 & \nabla_4(\Pi) & = \nabla_3(\Pi) \msum \{ \{ \F b, \T \{a,d\} \} \}
                 \\
        \sigma_5 & = ( \proofstepcompsup, \{\F\{c,\naf e\}, \F\{\naf a, \naf e\}\}, e )
                 & \nabla_5(\Pi) & = \nabla_4(\Pi) \msum \{ \{ \T e, \F\{c,\naf e\}, \F\{\naf a, \naf e\} \} \}
                 \\
        \sigma_6 & = ( \proofsteploop, \{ \T a, \T b\}, a )
                 & \nabla_6(\Pi) & = \nabla_5(\Pi) \msum \{ \{ \T a, \F \{c\}, \F \{c,d\} \} \}
                 \\
        \sigma_7 & = ( \proofstepadd, \{ \F \{c\}, \T a \}, \dummyvar )
                 & \nabla_7(\Pi) & = \nabla_6(\Pi) \msum \{ \{ \F \{c\}, \T a \} \}
                 \\
        \sigma_8 & = \FIX{ ( \proofstepcompsup, \{\F\{\naf d\}\}, c ) }
                 & \nabla_8(\Pi) & = \nabla_7(\Pi) \msum \{ \{ \T c, \F\{\naf d\} \} \}
                 \\
        \sigma_9 & = ( \proofstepcompsup, \{\F\{\naf c\}\}, d )
                 & \nabla_9(\Pi) & = \nabla_8(\Pi) \msum \{ \{ \T d, \F\{\naf c\} \} \}
                 \\
        \sigma_{10} & = ( \proofstepcompsup, \{\F\{a,d\}, \F\{c,d\} \}, b )
                    & \nabla_{10}(\Pi) & = \nabla_9(\Pi) \msum \{ \{ \T b, \F\{a,d\}, \F\{c,d\} \} \}
                    \\
        \sigma_{11} & = ( \proofstepadd, \{ \T e, \F \{\naf a, \naf e\} \}, \dummyvar )
                    & \nabla_{11}(\Pi) & = \nabla_{10}(\Pi) \msum \{ \{ \T e, \F \{\naf a, \naf e\} \} \}
                    \\
        \sigma_{12} & = ( \proofstepcomp, \{ \F e, \T \{c,\naf e\} \}, \dummyvar )
                    & \nabla_{12}(\Pi) & = \nabla_{11}(\Pi) \msum \{ \{ \F e, \T \{c,\naf e\} \} \}
                    \\
        \sigma_{13} & = ( \proofstepadd, \{ \T a \}, \dummyvar )
                    & \nabla_{13}(\Pi) & = \nabla_{12}(\Pi) \msum \{ \{ \T a \} \}
                    \\
        \sigma_{14} & = ( \proofstepcomp, \{ \F a, \T \{b,d\} \}, \dummyvar )
                    & \nabla_{14}(\Pi) & = \nabla_{13}(\Pi) \msum \{ \{ \F a, \T \{b,d\} \} \}
                    \\
        \sigma_{15} & = ( \proofstepcomp, \{ \F a, \T \{c\} \}, \dummyvar )
                    & \nabla_{15}(\Pi) & = \nabla_{14}(\Pi) \msum \{ \{ \F a, \T \{c\} \} \}
                    \\
        \sigma_{16} & = ( \proofstepadd, \{ \T e \}, \dummyvar )
                    & \nabla_{16}(\Pi) & = \nabla_{15}(\Pi) \msum \{ \{ \T e \} \}
                    \\
        \sigma_{17} & = ( \proofstepcomp, \{ \F e, \T \{\naf a,\naf e\} \}, \dummyvar )
                    & \nabla_{17}(\Pi) & = \nabla_{16}(\Pi) \msum \{ \{ \F e, \T \{\naf a,\naf e\} \} \}
                    \\
        \sigma_{18} & = ( \proofstepadd, \Box, \dummyvar )
                    & \nabla_{18}(\Pi) & = \nabla_{17}(\Pi) \msum \{ \Box \}.
    \end{align*}
    We show that the proof step $\sigma_7$ is correct, i.e.,
    that $\{ \F \{c\}, \T a \}$ is RUP for $\nabla_6(\Pi)$.
    To this end, we need to derive $\Box$ from
    $\nabla_6(\Pi) \cup \{  \{ \T \{c\} \},  \{ \F a \}  \}$
    by unit propagation.
    With the nogood $\{ \F\{c\}, \T c \} \in \Delta_\progex^\compdef$,
    we derive the unit nogood $\{ \T c \}$.
    With $\{ \T\{c,d\}, \F c \} \in \Delta_\progex^\compdef$
    we now get $\{ \T\{c,d\} \}$.
    With these unit nogoods, $\lambda(a,L)$ reduces to $\Box$.
\end{example}

\subsection{Correctness of \ASPDRUPE}

Next, we establish soundness and completeness of the~\ASPDRUPE format. 

\newcommand{\InvHelper}{\Delta_\prog \cup \Lambda_\prog \cup D_n}%

\begin{lemma}[Invariants]\label{lem:invariant_rup_del}
  Let $\prog$ be a logic program and %
  $\Pi = \langle \sigma_1, \ldots, \sigma_n \rangle$ be a finite \ASPDRUPE
  derivation for program~$P$.  Moreover, let $D_{i}$ be the
  accumulated set of nogoods introduced by the extension rules in
  $\sigma_k$ for all $k \in \{1, \ldots, i\}$. Then, the following
  holds:
  $\Delta_\prog \cup \Lambda_\prog \cup D_i \models \nabla_i(\Pi)
  \text{ for all } i \in \{0, \ldots, n\}$.
\end{lemma}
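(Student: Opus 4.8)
The plan is to prove the statement by induction on $i$, showing that each proof step preserves the entailment $\models$. The crucial preparatory fact is \emph{soundness of unit propagation}: if $\Delta \proves_1 \delta'$, then $\Delta \models \{\delta'\}$. I would establish this by induction on the level of the fixpoint defining $\Delta_{\proves_1}$, the only nontrivial step being the unit-resolution step: whenever $\delta \in \Delta_{\proves_1}^{j-1}$, $l \in \delta$, and $\{\lcompl{l}\} \in \Delta_{\proves_1}^{j-1}$, any total assignment $A$ satisfying both $\delta$ and $\{\lcompl{l}\}$ must contain $l$ (it assigns the variable of $l$, and $\lcompl{l}\notin A$), so it cannot contain $\delta\setminus\{l\}$ without containing all of $\delta$; hence it satisfies $\delta\setminus\{l\}$. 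Specializing to $\delta' = \Box$ yields the consequence I actually need: if $\delta$ is RUP for a set $\Delta$ of nogoods, then $\Delta \models \{\delta\}$, because $\Delta \cup \{\{\lcompl{l}\} \mid l \in \delta\}$ is then unsatisfiable, and a total assignment satisfies all units $\{\lcompl{l}\}$ exactly when it contains $\delta$.

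Before the main induction I would record two routine monotonicity properties of $\models$, both proved by extending a given assignment to the extra variables of the larger side: enlarging the premise set and shrinking the conclusion set each preserve entailment, and $\models$ is transitive. These let me discharge the variable-coverage bookkeeping in the definition of $\models$ once and for all, and pass freely between $\nabla_{i-1}(\Pi)$ and its super-/subsets. Multiplicities in the multisets $\nabla_i(\Pi)$ and in $D_i$ are irrelevant for $\models$, since satisfaction of a (multi)set of nogoods depends only on its underlying set.

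For the induction, the base case $i=0$ is immediate, as $\nabla_0(\Pi) = \Delta_\prog^{\compdef} \subseteq \Delta_\prog$ and $D_0 = \emptyset$. For the step I assume $\Delta_\prog \cup \Lambda_\prog \cup D_{i-1} \models \nabla_{i-1}(\Pi)$ and split on the type of $\sigma_i = (t,\delta,a)$. The cases $\proofstepcomp$, $\proofstepcompsup$, and $\proofsteploop$ are uniform: the inserted nogood already lies in $\Delta_\prog^\leftarrow \subseteq \Delta_\prog$, in $\Delta_\prog^\rightarrow \subseteq \Delta_\prog$ (using the side condition $\{B_1,\dots,B_k\}=\ebs(\prog,a)$), or in $\Lambda_\prog$ (as $a_1 \in L \in \lop(\prog)$), respectively, while $D_i = D_{i-1}$; so the new nogood is entailed by the unchanged premise and the hypothesis finishes. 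For $\proofstepadd$ we have $D_i = D_{i-1}$ and $\delta$ is RUP for $\nabla_{i-1}(\Pi)$, so $\nabla_{i-1}(\Pi) \models \{\delta\}$ by the RUP fact above; composing with the hypothesis by transitivity gives $\Delta_\prog \cup \Lambda_\prog \cup D_{i-1} \models \{\delta\}$, hence the premise entails $\nabla_{i-1}(\Pi) \msum \{\delta\} = \nabla_i(\Pi)$. For $\proofstepdel$, $\nabla_i(\Pi)$ drops a nogood, so conclusion-shrinking monotonicity applies directly.

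The step that drives the whole design, and which I would treat most carefully, is the extension rule $\proofstepext$. Here $\nabla_i(\Pi) = \nabla_{i-1}(\Pi) \msum \ext(a,\delta)$, and by definition of $D_i$ the extension nogoods are accumulated on the \emph{premise} side: $D_i = D_{i-1} \cup \ext(a,\delta)$. Thus any assignment satisfying $\Delta_\prog \cup \Lambda_\prog \cup D_i$ automatically satisfies the new nogoods $\ext(a,\delta)$ (they are part of the premise) and satisfies $\nabla_{i-1}(\Pi)$ by the hypothesis together with premise-enlarging monotonicity; hence it satisfies all of $\nabla_i(\Pi)$. The point to emphasise is that the invariant is stated with $D_i$ on the left precisely so that the definitional extension nogoods need never be \emph{derived}; the freshness of $a$ plays no role in this direction (it is needed elsewhere, for transferring a refutation of $\nabla_n(\Pi)$ back to $\prog$), making this case the easiest once the invariant is set up correctly. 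The only genuine subtlety across all cases is the tracking of which variables each assignment must cover, which the monotonicity observations handle uniformly.
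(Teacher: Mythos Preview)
Your proposal is correct and follows essentially the same inductive strategy with the same case split as the paper's proof sketch; you add explicit justifications (soundness of RUP, monotonicity and transitivity of $\models$) that the paper leaves implicit, and your remark that freshness of the extension atom is not needed for this invariant but only later for Theorem~\ref{ref:normalasp} is accurate.
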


\begin{proof}[Proof (Sketch).]\renewcommand{\proofbox}{}
  We proceed by induction over the length~$n$ of the derivation.  For
  the base case, we have $\nabla_0(\Pi) = \Delta_P^\compdef$. Hence,
  $\nabla_0(\Pi) \subseteq \Delta_\prog \cup \Lambda_\prog \cup
  D_0$ and the claim holds trivially.
  For the induction step, we assume that the statement holds for
  length~$i$ and consider step $\sigma_{i + 1}$. It remains to do a
  case distinction for the type:
  \begin{enumerate}
  \item Deletion with $\sigma_{i+1}=(\proofstepdel,\delta,\dummyvar)$:
    Immediately, we have $\nabla_i(\Pi) \models \nabla_{i+1}(\Pi)$. Thus,
    transitivity of~$\models$ and the induction hypothesis
    establishes this case.
  \item Addition with $\sigma_{i+1}=(\proofstepadd,\delta,\dummyvar)$:
    Since~$\delta$ is RUP for~$\nabla_i(\Pi)$, we know that $\{\delta\}$ is
    a logical consequence of $\Delta_i$.  The remaining steps to draw
    the conclusion are similar to the deletion step~case.
  \item Completion Rule Addition with $\sigma_{i+1}=(\proofstepcomp,\delta,\dummyvar)$:
    Since the resulting nogood~$\delta\in\Delta_P^\leftarrow$ is contained in~$\Delta_P$, 
    the result follows.
\item Completion Support Addition with $\sigma_{i+1}=(\proofstepcompsup,\delta,a)$:
    Since the resulting nogood~$\{\T a\}\cup\delta$ %
    is contained in~$\Delta_P^\rightarrow$, the result follows.
  \item Extension with $\sigma_{i+1}= (\proofstepext,\delta,a)$:
    According to the induction hypothesis we have
    $\Delta_\prog \cup \Lambda_\prog \cup D_i \models \nabla_i(\Pi)$.
    \FIX{Then, $D_i \subset D_{i+1}$, since $a$ is a fresh variable and $\ext(a,\delta)\in D_{i+1}\setminus D_i$}.
    As $\models$ is monotone, and $D_i \subset D_{i+1}$, we know that
    $\Delta_\prog \cup \Lambda_\prog \cup D_{i+1} \models \nabla_i(\Pi)$.
    It then follows that
    $\Delta_\prog \cup \Lambda_\prog \cup D_{i+1} \models
    \nabla_{i+1}(\Pi)$.
  \item Loop addition with $\sigma_{i+1}=(\proofsteploop,L,a_1)$: By
    definition nogood~$\delta\eqdef\lambda(a_1,L)$ is already
    contained in $\Delta_\prog \cup \Lambda_\prog \cup D_i$, which
    immediately establishes this case. \quad\qed%
\end{enumerate}%
\end{proof}

\begin{theorem}[Soundness and Completeness]\label{ref:normalasp}
  Let $P$ be a logic program. Then, $P$ is inconsistent if and only
  if there is an \ASPDRUPE~proof~for~$P$.
\end{theorem}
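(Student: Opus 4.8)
The plan is to prove the two implications separately, using Lemma~\ref{lem:invariant_rup_del} for the ``if'' direction and refutational completeness of propositional resolution for the ``only if'' direction, with both directions routed through the answer-set characterization recalled in the preliminaries: $P$ has an answer set if and only if $\Delta_\prog \cup \Lambda_\prog$ admits a satisfying assignment~\cite{LinZhao03,Faber05}. For soundness, suppose an \ASPDRUPE proof $\Pi = \langle \sigma_1, \ldots, \sigma_n \rangle$ for $P$ exists, so $\Box \in \nabla_n(\Pi)$. Lemma~\ref{lem:invariant_rup_del} gives $\Delta_\prog \cup \Lambda_\prog \cup D_n \models \nabla_n(\Pi)$, where $D_n$ collects the extension nogoods. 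First I would show that $\Delta_\prog \cup \Lambda_\prog$ is unsatisfiable. If some assignment $A$ satisfied it, then, processing the extension steps in order, I would extend $A$ to an assignment $A'$ satisfying $D_n$ as well: each step $(\proofstepext, \delta, a)$ introduces a genuinely fresh atom $a$ whose defining nogoods $\ext(a, \delta)$ only force $a$ to be equivalent to $\bigwedge_{l \in \delta} l$ over already-present variables, so every model of the earlier set extends conservatively. Then $A'$ would satisfy $\Delta_\prog \cup \Lambda_\prog \cup D_n$, hence satisfy $\nabla_n(\Pi)$, hence satisfy $\Box$; but $\Box = \emptyset$ is a subset of every assignment and is therefore violated by all of them, a contradiction. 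Thus $\Delta_\prog \cup \Lambda_\prog$ has no model, and by the characterization $P$ is inconsistent.

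For completeness, assume $P$ is inconsistent; by the same characterization $\Delta_\prog \cup \Lambda_\prog$ is unsatisfiable. I would build the derivation in two phases. In the first phase, starting from $\nabla_0(\Pi) = \Delta_\prog^\compdef$, I materialize the remaining completion and loop nogoods: a completion rule addition $(\proofstepcomp, \delta, \dummyvar)$ for every $\delta \in \Delta_\prog^\leftarrow$, a completion support addition for every atom $a$ (listing $\ebs(\prog, a)$) so as to obtain all of $\Delta_\prog^\rightarrow$, and a loop addition for every loop nogood in $\Lambda_\prog$, which is legal since each underlying $U \in \lop(P)$. As $\Delta_\prog = \Delta_\prog^\compdef \cup \Delta_\prog^\rightarrow \cup \Delta_\prog^\leftarrow$, the accumulated nogood set now equals $\Delta_\prog \cup \Lambda_\prog$ (up to multiplicities, which are irrelevant for satisfiability and for $\proves_1$). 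In the second phase I invoke refutational completeness of resolution: reading each nogood as the clause of the complementary literals, the unsatisfiable set $\Delta_\prog \cup \Lambda_\prog$ admits a resolution refutation $\delta^{(1)}, \ldots, \delta^{(m)} = \Box$. Each resolvent $\delta^{(j)}$ of two earlier nogoods $\{l\} \cup C_1$ and $\{\lcompl{l}\} \cup C_2$ is RUP for its two parents: adding the units $\{\lcompl{m}\}$ for $m \in C_1 \cup C_2$, unit propagation reduces the parents to $\{l\}$ and $\{\lcompl{l}\}$ and then to $\Box$. Since RUP is preserved under supersets of the nogood set, each $\delta^{(j)}$, including the final $\Box$, can be inserted by an addition step $(\proofstepadd, \delta^{(j)}, \dummyvar)$. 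The resulting derivation satisfies $\Box \in \nabla_n(\Pi)$ and is therefore an \ASPDRUPE proof for $P$.

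The routine parts are the bookkeeping of the first phase and the monotonicity observation that a nogood RUP for a set stays RUP for any superset. The main obstacle is the second phase, namely certifying unsatisfiability of $\Delta_\prog \cup \Lambda_\prog$ using only unit-propagation-checkable additions; this is exactly the simulation of resolution by RUP, and the crux is the per-resolvent RUP claim combined with refutational completeness of resolution, which reduces the ASP case to the well-known SAT situation once the completion and loop nogoods are made explicit. A secondary point to handle carefully is the conservativity of the extension rule in the soundness direction, i.e.\ that fresh-atom definitions never turn a satisfiable instance unsatisfiable.
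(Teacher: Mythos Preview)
Your proposal is correct and follows essentially the same approach as the paper's proof. Both directions are routed through the characterization of answer sets via $\Delta_\prog \cup \Lambda_\prog$, with soundness via Lemma~\ref{lem:invariant_rup_del} plus conservativity of extension, and completeness by first materializing $\Delta_\prog^\leftarrow$, $\Delta_\prog^\rightarrow$, and $\Lambda_\prog$ and then appealing to refutational completeness at the propositional level. The only (minor) difference is that the paper simply invokes completeness of RUP for unsatisfiable CNFs with a citation, whereas you unfold this into ``resolution is refutationally complete and each resolvent is RUP relative to its parents''; your version is slightly more self-contained but amounts to the same argument.
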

\begin{proof}[Proof (Sketch).]
  Let $P$ be a logic program. %
  ``$\Leftarrow$'':
  Assume there is an \ASPDRUPE proof of $P$.  By definition, there is
  a finite sequence of proof steps
  $\langle \sigma_i, \ldots, \sigma_n \rangle$ such that
  $\Box \in \nabla_n(\Pi)$ and $\nabla_n(\Pi)$ is inconsistent.  From
  Lemma~\ref{lem:invariant_rup_del}, we obtain that $\InvHelper$ is
  inconsistent.  As $D$ consists of extension nogoods with disjoint
  variables, we know that $\Delta_\prog \cup \Lambda_\prog$ is
  inconsistent.  We conclude from an earlier
  result~\cite[Theorem~5.4]{GebserEtAl12} that~$P$ is inconsistent.
  ``$\Rightarrow$'': Suppose $P$ is inconsistent.  According to
  earlier work~\cite[Theorem~5.4]{GebserEtAl12}, we know that
  $\Delta_\prog \cup \Lambda_\prog$ is inconsistent.  RUP is
  complete~\cite{Gelder08,GoldbergNovikov03}, which means that for
  every propositional, unsatisfiable formula~$F$ there is a RUP proof
  for~$F$. Hence, we can construct an \ASPDRUPE proof for~$\prog$ as
  follows: 
  (i) Output all completion rule additions for~$\Delta_\prog^\leftarrow$ and completion support additions for~$\Delta_\prog^\rightarrow$.
  (ii) Generate loop addition steps for all
  loops~$L\in\lop(\prog)$.
  (iii) Transform~$\Delta_\prog \cup \Lambda_\prog$ into a
  propositional formula~$F = \lcompl{\Delta_\prog \cup \Lambda_\prog}$
  by inverting all nogoods. 
  (iv) Construct and use a RUP proof for~$F$. Then, output addition
  rules accordingly, where again all clauses need to be inverted to
  obtain addition proof steps using nogoods. 
\end{proof}

Note that in the only-if direction of the proof, one can also use
RAT~\cite{WetzlerHeuleHunt14} proofs without deletion information and
afterwards translate RAT steps into extended resolution
steps~\cite{KieslEtAl18}.

Listing~\ref{algo:proof-checker} presents the \ASPDRUPE checker, that decides whether a given \ASPDRUPE proof is correct.
The input to the checker is both the original program $P$ and the proof $\Pi$.
To check the proof, we encode $P$ into nogoods $\Delta_P$ and then check each statement $\sigma \in \Pi$ sequentially.

\begin{algorithm}[t]
    \caption{\ASPDRUPE-Checker}
    \label{algo:proof-checker}
    \DontPrintSemicolon
    \SetKwInOut{Input}{Input}
    \SetKwInOut{Output}{Output}
    \SetKwBlock{Loop}{loop}{end}
    \SetKwFor{Foreach}{for each}{do}{end}

    \Input{A logic program $P$ and an \ASPDRUPE derivation $\Pi = \langle \sigma_1, \ldots, \sigma_n \rangle$.}
    \Output{\emph{Success} if $\Pi$ proves that $P$ has no answer set, \emph{Error} otherwise.}

    $\nabla \coloneqq \Delta_P^\compdef$\;
    \For{$i = 1, \ldots, n$}{%
         \uIf{$\sigma_i = (\proofstepadd, \delta, \dummyvar)$ \textbf{\textup{and}} $\nabla \proves_1 \delta$}{%
            $\nabla \coloneqq \nabla \msum \{ \delta \}$%
        }
	\uElseIf{$\sigma_i = (\proofstepcomp, \delta, \dummyvar)$ \textbf{\textup{and}} $\delta\in\Delta_\prog^\leftarrow$}{%
            $\nabla \coloneqq \nabla \msum \{ \delta \}$%
        }
        \uElseIf{$\sigma_i = (\proofstepcompsup, \delta, a)$ \textbf{\textup{and}} $\delta=\{ \F B \mid B\in\ebs(\prog,a)\}$}{%
            $\nabla \coloneqq \nabla \msum \{ \delta\cup\{\T a\} \}$%
        }
        \uElseIf{$\sigma_i = (\proofstepext, \delta, a)$ \textbf{\textup{and}} $a$ \textup{is a fresh atom w.r.t. $\bigcup_{j=0}^{j=i-1}\nabla_j(\Pi)\cup\Delta_\prog$}}{%
            $\nabla \coloneqq \nabla \msum \ext(a,\delta)$%
        }
        \uElseIf{$\sigma_i = (\proofstepdel, \delta, \dummyvar)$}{%
            $\nabla \coloneqq \nabla \mdiff \{ \delta \}$%
        }
        \uElseIf{$\sigma_i = (\proofsteploop, \{ \T a_1, \ldots, \T a_k \}, a_1)$}{%
            $U \coloneqq \{a_1, \ldots, a_k\}$\;%
            \leIf{$U \in \progloops(P)$}{%
                $\nabla \coloneqq \nabla \msum \{ \lambda(a_1, U) \}$%
            }{%
                \Return $\mathit{Error}$%
            }
        }
        \Else{%
            \Return $\mathit{Error}$\;
        }
    }

    \leIf{$\Box \in \nabla$}{%
        \Return $\mathit{Success}$%
    }{%
        \Return $\mathit{Error}$%
    }
\end{algorithm}  %

\future{
\subsection{ASP-RAT: Idea}

In this section, I give some proof ideas and invariants that I believe hold when we do not use the classical RAT, but a version of RAT the takes the background theory (the loop formulas) into account.

The following definition is a straight-forward generalization of ASP to an arbitrary propositional theory.

\begin{definition}[ASP-RAT]
  A clause $C$ is RAT upon a literal $L$ in the formula $F$ modulo a formula $G$ if and only if the following holds:
  \begin{itemize}
  \item $C$ is a RUP in the formula $F$, or
  \item for every clause $D \in F \cup G$ with $\neg L \in D$ it holds:  $Res(C, D, L)$ is RUP in $F$.
  \end{itemize}
\end{definition}

In the following, I assume that we do not use the classical version of RAT in the proof checking, but the above one. Question: Is the above enough to capture pre and inprocessing techniques applied in modern ASP solvers?

\begin{proposition}
  If $C$ is RAT in $F$, then $C$ is RAT in the formula $F$ modulo $\emptyset$.
\end{proposition}
\begin{proof}
  Straightforward
\end{proof}

\begin{proposition}
  If $C$ is \ASPRAT in $F$ modulo $G$, then $C$ is RAT in $F \cup G$.
\end{proposition}
\begin{proof}
  Straightforward; the set of resolution candidates is the same, and since resolvents are RUP in $F$, they are RUP in $F \cup G$ (monotonicity of RUP)
\end{proof}

\begin{lemma}
  If $C$ is \ASPRAT in $F$ modulo $G$, then $G \land F \land C$ is equivalent to $G \land F$ w.r.t. satisfiability.
\end{lemma}
\begin{proof}
  Straightforward from a well known theorem.
\end{proof}

Propositional variable elimination is the process of eliminating a variable in a formula while satisfiability is preserved (DP-style).

\begin{lemma}
  Let $C$ be \ASPRAT upon $L$ in $F$ modulo $G$. Then
  $VE(F \land G, L) \equiv VE(F \land G \land C, L)$
\end{lemma}
\begin{proof}
  Generalizes some theorem for RAT.
\end{proof}

Now, to the invariants for a proof system using \ASPRAT, loop formula and no deletions:
\begin{lemma}[Invariants]\label{lem:invariant_rat}
  Let $\langle \sigma_1, \ldots, \sigma_n \rangle$ be a finite \ASPRAT derivation for a logic program $P$, and let $R_i$ be the accumulated RAT literals from the derivation.
  Then, the following holds:

  \begin{itemize}
  \item[i)] $VE(\{\lambda (a, U) \mid a \text { is a literal }, U \in \lop(P)\} \cup  \Delta_\prog, R_i) \equiv VE(\nabla_i, R_i)$ for all $i \in \{0, \ldots, n\}$, and
  \item[ii)] if $\nabla_i$ is unsatisfiable, then $\{\lambda (a, U) \mid a \text { is a literal }, U \in \lop(P)\} \cup  \Delta_\prog$ is unsatisfiable $i \in \{0, \ldots, n\}$.
  \end{itemize}
\end{lemma}

Now, to the invariants for a proof system using \ASPRAT, loop formula and arbitrary deletions:

\begin{lemma}[Invariants]\label{lem:invariant_rat_del}
  Let $\langle \sigma_1, \ldots, \sigma_n \rangle$ be a finite \ASPRAT derivation for a logic program $P$, and let $R_i$ be the accumulated RAT literals from the derivation.
  Then, the following holds:

  \begin{itemize}
  \item[i)] $VE(\{\lambda (a, U) \mid a \text { is a literal }, U \in \lop(P)\} \cup  \Delta_\prog, R_i) \models VE(\nabla_i, R_i)$ for all $i \in \{0, \ldots, n\}$, and
  \item[ii)] if $\nabla_i$ is unsatisfiable, then $\{\lambda (a, U) \mid a \text { is a literal }, U \in \lop(P)\} \cup  \Delta_\prog$ is unsatisfiable $i \in \{0, \ldots, n\}$.
  \end{itemize}
\end{lemma}

\begin{theorem}[Correctness \ASPRAT]
  A logic program $P$ is inconsistent if and only if there exists an \ASPRAT proof of $P$.
\end{theorem}

\begin{proof}
  We consider the right to left direction: Suppose there is an \ASPRAT proof of $P$.
  Then, there is a finite sequence of proof steps $\langle \sigma_i, \ldots, \sigma_n \rangle$ such that $\Box \in \nabla_n$.
  Then, $\nabla_n$ is inconsistent.
  From Lemma~\ref{lem:invariant_rat_del} we obtain that $\InvHelper$ is inconsistent.
  From another well established theorem we have that $P$ is inconsistent.

  For the converse direction, the statement follows from Theorem before since an \ASPRUP proof is a \ASPRAT proof.
\end{proof}
}

\begin{lemma}
  For a given logic program~$\prog$ and an~\ASPDRUPE
  derivation~$\Pi$, the~\ASPDRUPE-Checker runs in time at
  most~$\Card{\Pi}^{\bigO{1}}$.
\end{lemma}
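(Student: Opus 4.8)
The plan is to bound the cost of one iteration of the checker's main loop and multiply by the number $n=\Card{\Pi}$ of iterations. The first ingredient is a bound on the size of the maintained nogood set $\nabla$. Although $\nabla_0=\Delta_\prog^\compdef$ may be exponential, only the body definitions for bodies actually mentioned in $\Pi$ are ever inspected: the unmaterialized definitions involve body variables that occur in no $\sigma_i$, hence in no propagation, membership, or freshness check, and may be ignored. There are at most $\Card{\Pi}$ such bodies, each of size bounded by the corresponding rule of $\prog$. Every non-loop step then adds at most $\Card{\delta}+1$ nogoods ($\ext(a,\delta)$ in the worst case), each of size at most $\Card{\Pi}$. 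Thus, apart from the external-body content of loop nogoods treated below, the materialized part of $\nabla$ stays of size polynomial in $\Card{\prog}+\Card{\Pi}$ throughout.

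Next I would bound the per-step cost by a case distinction on the type. An addition $(\proofstepadd,\delta,\dummyvar)$ only runs the RUP test $\nabla\proves_1\delta$, i.e., unit propagation on $\nabla\cup\{\{\lcompl{l}\}\mid l\in\delta\}$ followed by a check for $\Box$; since each variable is fixed at most once and each fixing inspects only the nogoods containing it, this is polynomial in $\Card{\nabla}+\Card{\delta}$. A completion rule addition $(\proofstepcomp,\delta,\dummyvar)$ merely \emph{verifies} that the given $\delta=\{\F a,\T B\}$ lies in $\Delta_\prog^\leftarrow$; for disjunctive and choice rules this is immediate, and for a weight rule $r$ it reduces to checking $B\subseteq B_r$, $\wght(r,B)\geq\bnd(r)$, and subset-minimality of $B$, all polynomial in $\Card{\delta}+\Card{r}$. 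The freshness condition of an extension step is a membership test of $a$ against the variables seen so far together with $\at(\prog)\cup\bodies(\prog)$, polynomial when these are kept in a lookup structure; a deletion requires no check; and the test $\{a_1,\dots,a_k\}\in\lop(\prog)$ of a loop addition amounts to deciding that the set induces a cycle in the positive dependency digraph $D_\prog$, a graph traversal polynomial in $\Card{\prog}+k$.

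The parts I expect to be the main obstacle are the body sets of weight rules: for a weight rule both $\ebs(\prog,a)$ and the external-body set $\eb(\prog,L)$ can be exponentially large. In the completion support addition $(\proofstepcompsup,\delta,a)$ the check $\delta=\{\F B\mid B\in\ebs(\prog,a)\}$ is nevertheless polynomial, because $\delta$ is part of $\Pi$: whenever the check succeeds $\Card{\ebs(\prog,a)}=\Card{\delta}\leq\Card{\Pi}$, so one verifies each $\F B\in\delta$ by the membership test above and enumerates $\ebs(\prog,a)$ only far enough to detect any size mismatch. For the loop addition the difficulty is genuine, since $\lambda(a_1,L)$ may itself be exponential; it is resolved exactly as in the footnote to the loop rule, i.e., either weight rules are excluded from loops under the considered (non-recursive-aggregate) semantics, so that $\eb(\prog,L)$ is of polynomial size, or the loop step is modified to list the involved external bodies explicitly, making their size part of $\Card{\Pi}$ and the step pure verification. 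With either measure every iteration, and hence all $n=\Card{\Pi}$ of them, runs in time polynomial in $\Card{\prog}+\Card{\Pi}$, the combined size of the checker's input, matching the claimed $\Card{\Pi}^{\bigO{1}}$ bound.
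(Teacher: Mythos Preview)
The paper states this lemma without any proof; it is asserted and immediately followed by the correctness corollary. So there is no ``paper's own proof'' to compare against, and your sketch is in fact more detailed than anything the authors supply.

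Your argument is essentially sound and tracks the paper's scattered remarks: the observation that unmentioned body definitions in $\Delta_\prog^{\compdef}$ need not be materialized is exactly what the authors note after defining $\nabla_0(\Pi)$, and your treatment of the loop-addition blow-up for weight rules mirrors the paper's own footnote. The per-step case analysis (RUP via unit propagation, syntactic membership tests for $\proofstepcomp$/$\proofstepcompsup$, freshness for $\proofstepext$, cycle detection for $\proofsteploop$) is correct.

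Two small points. First, your final sentence equates ``polynomial in $\Card{\prog}+\Card{\Pi}$'' with the stated $\Card{\Pi}^{\bigO{1}}$; these are not the same unless $\Card{\prog}$ is bounded by a polynomial in $\Card{\Pi}$. The paper itself is loose here (earlier it promises verification ``in polynomial time in its length and the size of the completion nogoods''), so this is almost certainly the intended reading, but you should flag the discrepancy rather than claim a match. Second, in the $\proofstepcompsup$ case your ``enumerate $\ebs(\prog,a)$ only far enough to detect any size mismatch'' presupposes that the minimal weight-satisfying subsets can be enumerated with polynomial delay; this is true but not entirely trivial, and a one-line pointer to such an enumeration argument (or simply noting that one can stop after producing $\Card{\delta}+1$ candidates) would tighten the sketch.
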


\begin{corollary}
  Given a logic program~$\prog$ and an~\ASPDRUPE
  derivation~$\Pi$. Then, the \ASPDRUPE-Checker is correct,~i.e., it
  outputs \emph{Success} if and only if~$\Pi$ is an \ASPDRUPE proof
  for the inconsistency~of~$\prog$.
\end{corollary}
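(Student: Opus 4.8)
The plan is to argue that the \ASPDRUPE-Checker is a faithful, line-by-line implementation of the inductive definitions of $\nabla_i(\Pi)$ and of an \ASPDRUPE derivation. Concretely, I would establish two facts: (a) the checker runs through all $n$ steps without returning \emph{Error} exactly when $\Pi$ is a valid \ASPDRUPE derivation for $\prog$, and in that case the value of $\nabla$ after step $i$ equals $\nabla_i(\Pi)$; and (b) the final test $\Box\in\nabla$ then captures precisely the condition $\Box\in\nabla_n(\Pi)$ that defines an \ASPDRUPE \emph{proof}. Combining (a) and (b) yields the claimed equivalence.

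For (a) I would first prove the invariant ``after processing $\sigma_1,\dots,\sigma_i$ without \emph{Error}, the current $\nabla$ equals $\nabla_i(\Pi)$'' by induction on $i$. The base case is the initialization $\nabla\coloneqq\Delta_\prog^{\compdef}=\nabla_0(\Pi)$. For the step, each branch of the checker updates $\nabla$ by exactly the multiset operation prescribed by the matching case of the definition of $\nabla_i(\Pi)$: the addition and completion-rule branches add $\{\delta\}$, the completion-support branch adds $\{\delta\cup\{\T a\}\}$, the extension branch adds $\ext(a,\delta)$, the deletion branch removes $\{\delta\}$ under multiset semantics, and the loop branch adds $\{\lambda(a_1,U)\}$. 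In parallel I would check that each branch guard coincides with the side-condition the definition imposes on a step of that type: $\delta\in\Delta_\prog^\leftarrow$, $\delta=\{\F B\mid B\in\ebs(\prog,a)\}$, freshness of $a$ w.r.t.\ $\bigcup_{j<i}\nabla_j(\Pi)\cup\Delta_\prog$, and $U\in\lop(\prog)$ are literally the respective conditions, while deletion is unconditional in both. Since the six types are pairwise distinct, a step whose type matches but whose guard fails falls through every \textbf{else if} to the final \textbf{else} and returns \emph{Error}; an unrecognized triple does likewise. Hence the checker errors on step $i$ iff $\sigma_i$ violates the derivation condition.

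The one non-routine point is the addition branch, where the checker tests $\nabla\proves_1\delta$, whereas the definition requires $\delta$ to be RUP for $\nabla$, namely $\nabla\cup\{\{\lcompl{l}\}\mid l\in\delta\}\proves_1\Box$. Showing these two are equivalent (using $\nabla=\nabla_{i-1}(\Pi)$ from the invariant) is the main obstacle. The easy direction extends a UP derivation of $\delta$ from $\nabla$ by propagating with the unit nogoods $\{\lcompl{l}\}$, stripping the literals of $\delta$ one by one down to $\Box$. The converse---that reverse unit propagation reaching $\Box$ gives a plain UP derivation of $\delta$---is the standard soundness and completeness of RUP with respect to unit propagation, which I would either cite or reprove by inspecting the fixpoint $\Delta_{\proves_1}$ and observing that the added units $\{\lcompl{l}\}$ contribute only in the final stripping steps.

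Assembling the pieces: by (a) the loop completes without \emph{Error} iff $\Pi$ is an \ASPDRUPE derivation, and then $\nabla=\nabla_n(\Pi)$, so the final branch returns \emph{Success} iff $\Box\in\nabla_n(\Pi)$. By definition this is exactly the statement that $\Pi$ is an \ASPDRUPE proof for the inconsistency of $\prog$, which proves the corollary. Termination, and hence that the checker does return one of the two verdicts, follows from the preceding running-time lemma.
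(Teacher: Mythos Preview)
Your approach---tracing the checker against the inductive definition of $\nabla_i(\Pi)$ and the per-step side conditions---is the natural one and is considerably more explicit than the paper's own justification, which in the long version is the single line ``follows immediately from Lemma~\ref{lem:invariant_rup_del}.''

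There is, however, a genuine gap at exactly the point you flag as non-routine. Under the paper's literal definition, $\nabla\proves_1\delta$ means $\delta\in\nabla_{\proves_1}$, and the fixpoint $\nabla_{\proves_1}$ is built by only \emph{removing} literals from nogoods already present; it never produces proper supersets. This is strictly stronger than RUP. Counterexample: with $\nabla=\{\{\T a,\T b\}\}$ and $\delta=\{\T a,\T b,\T c\}$, the nogood $\delta$ is RUP for $\nabla$ (the added units $\{\F a\},\{\F b\}$ strip $\{\T a,\T b\}$ down to $\Box$), yet $\nabla_{\proves_1}=\nabla$ and $\delta\notin\nabla_{\proves_1}$. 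Your sketch of the converse (``the added units contribute only in the final stripping steps'') already fails here and fails more visibly once the added units trigger a cascade, e.g.\ $\nabla=\{\{\T a,\T b\},\{\F b,\T c\}\}$ with $\delta=\{\T a,\T c\}$.

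Taken literally, then, the checker's guard is stricter than the RUP condition in the definition of an \ASPDRUPE derivation, and the direction ``$\Pi$ is an \ASPDRUPE proof $\Rightarrow$ checker outputs \emph{Success}'' would not go through. What is really happening is a notational slip in the paper: in the algorithm listings the symbol $\proves_1$ is used in the customary SAT sense as shorthand for the RUP test itself, not for literal fixpoint membership. Under that reading both sides are the same condition $\nabla\cup\{\{\lcompl{l}\}\mid l\in\delta\}\proves_1\Box$, the ``equivalence'' is trivial, and the rest of your argument goes through. You should say this rather than promise to prove an equivalence that, as stated, is false.
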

\longversion{%
\begin{proof}
  The statement follows immediately from
  Lemma~\ref{lem:invariant_rup_del}.
\end{proof}

}

\subsection{Extension to Optimization}
Next, we briefly mention how to verify cost optimization.  To this end, an
\emph{optimization rule} is an expression of the form
$\optimize l_{1}[w_1]$, where~$l_1$ is a literal.  Intuitively this
indicates that if an assignment satisfies~$\ba(\prog,\{l_1\})$, then
this results in costs~$w_1$.  Overall, one aims to minimize the
total costs, i.e., the goal is to deliver an answer set of minimal
total costs.
Therefore, if one wants to verify, whether a given answer set
candidate is indeed an answer set of minimal costs, we foresee the
following extension to~\ASPDRUPE, where such a proof consists of the
following two parts.
(i)~An answer set that shows a solution with costs~$c$ exists.
(ii)~An \ASPDRUPE proof that shows that the program restricted to
costs~$c-1$ is inconsistent.
Note that for disjunctive programs already the first part also needed to contain a second 
proof showing that indeed there cannot be an unfounded set for the provided answer set.
Further, it is not immediate, how this extends to unsatisfiable cores.
Hence, so far it only applies to progression based approaches.

\section{Integrating \ASPDRUPE Proofs into a Solver}

\begin{algorithm}
    \caption{{CDNL-\ASPDRUPE}: CDNL-ASP~\protect \cite[page 93]{GebserEtAl12} extended by proof logging}
    \label{algo:CDNL-ASP-RAT}
    \DontPrintSemicolon%
    \SetKwInOut{Input}{Input}
    \SetKwInOut{Output}{Output}
    \SetKwBlock{Loop}{loop}{end}

    \Input{A logic program $P$.}
    \Output{An answer set of $P$ or an \ASPDRUPE proof $\Pi$ certifying that $P$ has no answer set.}

    $A \coloneqq \emptyset$, $\nabla \coloneqq \emptyset$, $\dl \coloneqq 0$, \algonew{$\Pi \coloneqq \emptyset$}\;

    \Loop{%
        $(A, \nabla, \algonew{\Pi}) \coloneqq \algonew{\NogoodPropagation}(P, \nabla, A, \algonew{\Pi})$  \tcp*{deterministic cons.}

        \uIf(\tcp*[f]{conflict}){$\varepsilon \subseteq A$ \textbf{\textup{for some}} $\varepsilon \in \Delta_P \cup \nabla$}{%
            \If{$\max( \{ \dlevel(\sigma) \mid \sigma \in \varepsilon \} \cup \{ 0 \}) = 0$}{%
                \Return
                (INCONSISTENT,
                \algonew{$\Pi \seqconcat \langle (\proofstepadd, \Box, \dummyvar) \rangle$})
            }

            $(\delta, \dl) \coloneqq \ConflictAnalysis(\varepsilon, P, \nabla, A)$
            \tcp*{$\delta$ is RUP for $\Delta_P \cup \nabla$}

            $\nabla \coloneqq \nabla \cup \{ \delta \}$
            \tcp*{add conflict nogood}

            $\algonew{ \Pi \coloneqq \Pi \seqconcat \langle (\proofstepadd, \delta, \dummyvar) \rangle }$
            \tcp*{record nogood addition in proof}

            $A \coloneqq A \setminus \{ \sigma \in A \mid \dl < \dlevel(\sigma) \}$
            \tcp*{backjumping}
        }
        \uElseIf(\tcp*[f]{answer set found}){$A^\T \cup A^\F = \atom(P) \cup \bodies(P)$}{%
            \Return (CONSISTENT, $A^\T \cap \atom(P)$)\label{ln:aset}
        }
        \Else{%
            $\sigma_d \coloneqq \Select(P,\nabla,A)$ \tcp*{decision}
            $\dl \coloneqq \dl + 1$\;
            $\dlevel(\sigma_d) \coloneqq \dl$\;
            $A \coloneqq A \cup \{\sigma_d\}$
        }
    }
\end{algorithm}  %
\begin{algorithm}
    \caption{NogoodPropagation \protect \cite[page 101]{GebserEtAl12} adapted for proof logging}
    \label{algo:NogoodPropagation}
    \DontPrintSemicolon
    \SetKwInOut{Input}{Input}
    \SetKwInOut{Output}{Output}
    \SetKwBlock{Loop}{loop}{end}
    \SetKwFor{Let}{let}{in}{end}

    \Input{A logic program $P$, a set $\nabla$ of nogoods, an assignment $A$, and an \ASPDRUPE derivation $\Pi$.}
    \Output{An extended assignment, a set of nogoods, and an \ASPDRUPE derivation (possibly empty).}

    $U \coloneqq \emptyset$ \;

    \Loop{

        \Repeat{$\Sigma = \emptyset$}{%
            \If(\tcp*[f]{conflict}){$\delta \subseteq A$ \textbf{\textup{for some}} $\delta \in \Delta_P \cup \nabla$}{%
	        \algonew{$\Pi \coloneqq \Pi \seqconcat \langle \sigma \mid \sigma = (\proofstepcomp, \delta, \epsilon),  \delta\in\Delta_\prog^\leftarrow, \sigma \not\in \Pi \rangle \seqconcat$} \tcp*[f]{record confl.\ completion nogood}\label{line:conflictcompletion}
		\algonew{\makebox[3.5em]{}$\langle \sigma \mid \sigma = (\proofstepcompsup, \delta \setminus \{\T a\}, a), \delta\in\Delta_\prog^\rightarrow, \T a\in \delta, \sigma\not\in\Pi \rangle$}

                \Return $(A, \nabla, \algonew{\Pi})$%
            }%
            $\Sigma \coloneqq \{ \delta \in \Delta_P \cup \nabla \mid \delta \setminus A = \{ \lcompl{\sigma} \}, \sigma \not\in A \}$  \tcp*{unit-resulting nogoods}%
            \For(\tcp*[f]{record unit completion nogoods in proof}){$\delta \in \Sigma$}{%
                \algonew{$\Pi \coloneqq \Pi \seqconcat \langle \sigma \mid \sigma = (\proofstepcomp, \delta, \epsilon),  \delta\in\Delta_\prog^\leftarrow, \sigma \not\in \Pi \rangle \seqconcat$} \label{line:unitcompletion}%
		\algonew{\makebox[3.5em]{}$\langle \sigma \mid \sigma = (\proofstepcompsup, \delta \setminus \{\T a\}, a), \delta\in\Delta_\prog^\rightarrow, \T a\in \delta, \sigma\not\in\Pi \rangle$}
            }%
            \If{$\Sigma \neq \emptyset$}{%
                \Let{$\lcompl{\sigma} \in \delta$ \textbf{\textup{for some}} $\delta \in \Sigma$}{%
                    $\dlevel(\sigma) \coloneqq \max (\{ 0 \} \cup \{ \dlevel(\rho) \mid \rho \in \delta \setminus \{ \lcompl{\sigma} \} \})$ \;
                    $A \coloneqq A \cup \{\sigma\}$ \;
                }%
            }%
        }

        \lIf{$\lop(P) = \emptyset$}{%
            \Return $(A, \nabla, \algonew{\Pi})$
        }

        $U \coloneqq U \setminus A^\F$ \;

        \lIf{$U = \emptyset$}{%
            $U \coloneqq \UnfoundedSet(P, A)$
        }

        \lIf{$U = \emptyset$}{%
            \Return $(A, \nabla, \algonew{\Pi})$%
            \tcp*[f]{no unfounded set} %
        }

        \Let{$a_0 \in U$}{
            $\nabla \coloneqq \nabla \cup \{ \{ \T a_0 \} \cup \{ \F B \mid B \in \eb(\prog,U) \} \}$  \tcp*{add loop nogood}
            $\algonew{ \Pi \coloneqq \Pi \seqconcat \langle (\proofsteploop, \{ \T X \mid X \in U \}, a_0) \rangle }$  \tcp*{record loop in proof}
        }

    }
\end{algorithm}  %

In the following, we describe the CDNL-ASP algorithm for 
logic programs~$\prog$ that we use as a basis for our theoretical
model. Afterwards, we describe how \emph{proof logging} can be
integrated.  In other words, during the run of an ASP solver, we
immediately print the corresponding \ASPDRUPE rules that are needed
later for verification in case the ASP solver concludes that the
program is inconsistent.
A typical CDNL-based ASP solver (cf., Listing~\ref{algo:CDNL-ASP-RAT})
relies on unit propagation, since this is a rather simple and
efficient way of concluding consequences.  Thereby it keeps a
set~$\nabla$ of nogoods, a current assignment~$A$, and a decision
level~$\dl$.  In a loop it applies
\emph{\NogoodPropagation} \cite[page 101]{GebserEtAl12} consisting of
unit propagation and loop propagation
(using~$\emph{\UnfoundedSet}$ \cite[page 104]{GebserEtAl12}) whenever
suited.  Then, if there is some nogood that is not satisfied, either
the program is inconsistent (at decision level 0) or
$\emph{\ConflictAnalysis}$~\cite[page 108]{GebserEtAl12} triggers
backtracking to an earlier decision level, followed by the learning of
a conflict nogood~$\delta$.  Otherwise, if all nogoods
in~$\Delta_\prog\cup\nabla$ are satisfied and all the variables are
assigned, an answer set is found, and otherwise some free variable is
selected ($\emph{\Select}$).

Listings~\ref{algo:CDNL-ASP-RAT} and~\ref{algo:NogoodPropagation} contain a prototypical CDNL-based ASP solver that is extended by proof logging, where the changes for proof logging are highlighted in red.
\FIX{We use the element operator ($\in$) to determine whether an element is in a sequence,
and denote the concatenation of two proofs by the $\seqconcat$ operator
as follows:}
$\langle \sigma_1, \ldots, \sigma_i \rangle \seqconcat \langle
\sigma_{i+1}, \ldots, \sigma_n \rangle \eqdef \langle \sigma_1,
\ldots, \sigma_i, \sigma_{i+1}, \ldots, \sigma_n \rangle$.
The idea is to start with an empty~\ASPDRUPE derivation.
Whenever a new nogood, or loop nogood is learned and added to~$\nabla$ accordingly,
this results in an added addition or loop addition proof step, respectively.
\FIX{Note that in Listing~\ref{algo:NogoodPropagation} we add completion rule addition steps and completion support addition steps, whenever unit propagation (or conflicts) involve rules in~$\Delta_\prog^\leftarrow$ or~$\Delta_\prog^\rightarrow$, respectively.
In particular, Lines~\ref{line:conflictcompletion} and~\ref{line:unitcompletion} take care of adding involved parts of the completion to the proof (if needed) accordingly.}
At the end, when the ASP solver concludes inconsistency, the proof is returned including the empty nogood as last nogood.
Note that advanced techniques (see, e.g.,~\cite{GebserEtAl12}) like \emph{forgetting} of learned clauses and \emph{restarting} of the ASP solver can also be implemented using deletion rules with~\ASPDRUPE.
\future{%
Learned clauses in CDCL/CDNL have property RUP, which implies they also have property RAT.
However, many preprocessing techniques in SAT solving cannot be expressed with RUP clauses alone, but require addition/deletion of RAT clauses (cite drat paper for this?).
}
As it turns out, \emph{preprocessing} in ASP is less sophisticated as for SAT.
In the literature, CDNL-based ASP solvers often use preprocessing techniques~\cite{GebserEtAl08} similar to SAT solvers, i.e.,
SatElite-like~\cite{EenBiere05} operations as variable and nogood elimination.
For simple preprocessing operations restricted to variable and nogood elimination \ASPDRUPE suffices.
Note that if Clark's completion is exponential in the program size due to weight rules,
also propagators~\cite{AlvianoDodaroMaratea18} are supported. For details we refer to the implementation in Section~\ref{sec:impl}.

\future{In this algorithm, we only need RUP (not the more general RAT).
However, state-of-the-art ASP solvers augment this basic algorithm
to employ the same(?)  %
preprocessing and inprocessing techniques as modern SAT solvers.
Some of these techniques can be expressed more conveniently with RAT
(most likely RUP can also express everything, but: the solver might need
to spend extra time to construct the RUP proof; and the smallest equivalent RUP proof might
even be exponentially larger than the RAT proof -- see drat-trim paper \cite{WetzlerHeuleHunt14},
I think they show this property).
Modified nogood propagation is given in Algorithm~\ref{algo:NogoodPropagation}.}

\begin{example}[CDNL-\ASPDRUPE]
    \label{ex:cdnl}
    We continue the previous Example~\ref{ex:proof}
    and indicate a possible CDNL-\ASPDRUPE run on $\progex$
    that leads to the exemplary \ASPDRUPE proof given above.
    We use the notation $\T X @ \mathit{dl}$ ($\F X @ \mathit{dl})$ to indicate that
    $X$ was assigned true (false) at decision level $\mathit{dl}$.
    \begin{enumerate}
        \item
            Initially, nothing can be propagated.
        \item
            After the decision $\T a @ 1$,
            unit propagation derives only $\F \{\naf a, \naf e\} @ 1$.
        \item
            After the second decision $\F \{ c \} @ 2$,
            we eventually derive $\T a @ 2$ and $\T b @ 2$ by unit propagation.
            Thus we discover the unfounded set $U = \{a,b\} \in \progloops(\progex)$,
            and add the loop nogood $\lambda(a, U)$.
        \item
            The loop nogood immediately leads to a conflict,
            and conflict analysis learns the nogood $\{ \F \{ c \}, \T a \}$.
        \item
            We backtrack to decision level 1, and after propagation, make the decision~$\T e @ 2$.
            We arrive at another conflict, and learn $\{ \T e, \F \{ \naf a, \naf e \} \}$.
        \item
            After backtracking, a conflict appears at decision level 1,
            and we learn $\{ \T a \}$.
        \item
            We backtrack to decision level 0, and decide on $\T e @ 1$.
            After arriving at a conflict almost immediately, we learn $\{ \T e \}$.
        \item
            We backtrack once more, and finally arrive at a conflict at decision level~0,
            returning INCONSISTENT along with an \ASPDRUPE proof.
            Note that the proof given in Example~\ref{ex:proof} is slightly simplified
            in that we only include those steps of types $\proofstepcomp$ and $\proofstepcompsup$
            that are actually used. %
    \end{enumerate}%
\end{example}

\friedhof{
\begin{algorithm}[H]
    \caption{HasRAT~\cite{HeuleHuntWetzler13}}
    \label{algo:HasRAT}
    \DontPrintSemicolon
    \SetKwInOut{Input}{Input}
    \SetKwInOut{Output}{Output}
    \SetKwBlock{Loop}{loop}{end}
    \SetKwFor{Foreach}{for each}{do}{end}

    \Input{A nogood $\delta = \{ l_1, \ldots, l_n \}$, a literal $l_1 \in \delta$, and a set of nogoods $\nabla$.}
    \Output{\emph{True} if $\delta$ has property RAT on literal $l_1$ w.r.t. $\nabla$, \emph{False} otherwise.}

    \If{$\nabla \not\proves_1 \delta$}{%
        \Foreach{$\gamma \in \nabla \text{ with } \lcompl{l_1} \in \gamma$}{%
            $\rho \coloneqq \Res(\delta, \gamma, l_1)$ \;
            \lIf{$\nabla \not\proves_1 \rho$}{%
                \Return $\mathit{False}$
            }
        }
        \Return $\mathit{True}$ \;
    }
    \Return $\mathit{True}$ \;
\end{algorithm}  %

Some definitions for HasRAT:  %
\begin{itemize}
    \item Given nogoods $\delta$, $\gamma$, literal $l$,
        the resolvent of $\delta$ and $\gamma$ w.r.t. $l$ is
        $\Res(\delta,\gamma,l) \coloneqq (\delta \setminus \{ l \}) \cup (\gamma \setminus \{ \lcompl{l} \})$.
\end{itemize}

Alternative version with explicit unit propagation (similar to the RAT-check in \cite{HeuleHuntWetzler13}):

\begin{algorithm}[H]
    \caption{HasRAT (alternative version)}
    \DontPrintSemicolon
    \SetKwInOut{Input}{Input}
    \SetKwInOut{Output}{Output}
    \SetKwBlock{Loop}{loop}{end}
    \SetKwFor{Foreach}{for each}{do}{end}

    \Input{A nogood $\delta = \{ l_1, \ldots, l_n \}$, a literal $l_1 \in \delta$, and a set of nogoods $\nabla$.}
    \Output{\emph{True} if $\delta$ has property RAT on literal $l_1$ w.r.t. $\nabla$, \emph{False} otherwise.}

    \If{$\Box \not\in \UnitPropagation(\nabla \cup \{ \lcompl{l_1}, \ldots, \lcompl{l_n} \})$}{%
        \Foreach{$\gamma \in \nabla \text{ with } \lcompl{l_1} \in \gamma$}{%
            Let $\{ l_1', \ldots, l_m' \} = \Res(\delta, \gamma, l_1)$ \;
            \lIf{$\Box \not\in \UnitPropagation(\nabla \cup \{ \lcompl{l_1'}, \ldots, \lcompl{l_m'} \}$}{%
                \Return $\mathit{False}$
            }
        }
        \Return $\mathit{True}$ \;
    }
    \Return $\mathit{True}$ \;
\end{algorithm}  %

\begin{algorithm}[H]
    \caption{UnitPropagation~\cite{HeuleHuntWetzler13}}
    \DontPrintSemicolon
    \SetKwInOut{Input}{Input}
    \SetKwInOut{Output}{Output}
    \SetKwBlock{Loop}{loop}{end}
    \SetKwFor{Foreach}{for each}{do}{end}

    \Input{A set of nogoods $\Gamma$}
    \Output{Set of nogoods after unit propagation}

    \While{$\{ l \} \in \Gamma$}{%
        \While{$\delta \in \Gamma$ with $l \in \delta$}{%
            $\Gamma \coloneqq \Gamma \setminus \{ \delta \}$ \;
        }
        \While{$\delta \in \Gamma$ with $\lcompl{l} \in \delta$}{%
            $\delta' \coloneqq \delta \setminus \{ \lcompl{l} \}$ \;
            $\Gamma \coloneqq (\Gamma \setminus \{ \delta \}) \cup \{ \delta' \}$ \;
        }
    }
    \Return $\Gamma$ \;
\end{algorithm}  %
}

\future{
\begin{theorem}
  The considered verification algorithm returns Success, but you logic program admits answer sets.
\end{theorem}

\begin{proof}[Proof (Idea)]
  Consider an arbitrary logic program that admits an answer set and has some loop formula.
  Now, consider a proof of the form that first deletes all clauses in $\nabla$.
  Then, you can lean unit clauses (they are RAT as no clause with the given variable exists).
  Afterwards, add a loop constraint.
  Now, $\nabla$ is inconsistent and we can infer the empty clause. 
\end{proof}

\todo{This theorem is technically false because the completion of the program was not correctly considered.
  Can one correct this example?}
\begin{theorem}
  The considered verification algorithm returns Success, but the logic program admits an answer set.
  This works also without the deletion rule.
\end{theorem}

\begin{proof}
  Idea.
  Consider a non-tight logic program (a logic program that has some interesting loop):

  \[
   e \leftarrow b, \sim f
   \]

   \[
   e \leftarrow e
   \]

   According to clingo, it admits the empty answer set.
   According to slide 270, \url{https://www.cs.uni-potsdam.de/~torsten/Potassco/Slides/characterizations.pdf}, $\{e\}$ is a loop and the loop formula is $e \rightarrow (b \land \sim f)$, i.e. the loop formula in clause form is $\neg e \lor b, \neg e \lor \neg f$.
   The completion part of the program is a tautology $e \rightarrow (e \lor \ldots)$ and therefore needs not to be considered.
   Consequently, we consider the propositional theory $e \lor \neg b \lor f$.
   Note that the clause $\neg b$ is RAT (it is pure and thus there are no resolution candidates).
   In a similar way, the unit clause $e$ is RAT.
   Then, after two RAT steps we have the theory
   \[
   e \lor \neg b \lor f \quad \neg b \quad e
   \]
   If we now add the loop formula $\neg e \lor b$, we obtain the inconsistent propositional theory
   \[
   e \lor \neg b \lor f \quad \neg b \quad e \quad \neg e \lor b
   \]

   I don't think this example works as written: \\
   We are working with the translation to nogoods as described in pages 343ff
   in \url{http://www.cs.uni-potsdam.de/~torsten/Potassco/Slides/solving.pdf}.
   (I'm using clauses now so it fits to the part above.)
   The translation of this logic program is the following clauses
   (with new variables $\alpha$ and $\beta$ standing for the bodies of the first and second rule, respectively):
   \[
       e \lor \lnot \alpha
       \quad \quad
       \lnot \alpha \lor b
       \quad \quad
       \lnot \alpha \lor \lnot f
       \quad \quad
       \lnot b \lor f \lor \alpha
   \]
   \[
       e \lor \lnot \beta
       \quad \quad
       \lnot \beta \lor e
       \quad \quad
       \lnot e \lor \beta
   \]
   \[
       \lnot e \lor \alpha \lor \beta
       \quad \quad
       \lnot b
       \quad \quad
       \lnot f
   \]
   The last line contains the support clauses for each atom.
   Because the atoms $b$ and $f$ do not appear in any rule head,
   their support clauses are the unit clauses $\lnot b$ and $\lnot f$.

   Furthermore, the loop clause that would be added also uses the new variables,
   it is $\lambda(e,\{e\}) = \lnot e \lor \alpha$.

   Considering the unit clause $e$:
   not RUP, because setting all variables to false satisfies all clauses and $\lnot e$.
   The resolvents are $\beta$ and $\alpha \lor \beta$.
   Neither of them follows from the clause set
   (again, because we can satisfy it by setting all variables to false).
   Thus the clause $e$ does not have RAT.

   \todoi{Am I right? Rhe resulting formula is semantically equivalent to a sequence of unit clauses ($\neg \alpha, \neg b, \neg f$) together with $e \leftrightarrow \beta$. There is no interesting RAT. I think we need a slightly larger example of a program that admit a loop. Could you provide one with a complete listing of Clark's Completion and the loop formulas?}
   \todoi{JR: Yes, I also think this example is too small. I was just working through the second counterexample here. I've added a more interesting example above (see Example~\ref{ex:program}). However, its translation consists of about 30 nogoods, which is why I thought we should not list them all. Or is that not a problem? I've worked out the translation (in nogood form) and a possible CDNL run in the file ``example.dl'' if you want to look at it.}
\end{proof}
}

\subsection{Implementation of~\ASPDRUPE in wasp solver}\label{sec:impl}
We provide an implementation of~\ASPDRUPE within the wasp~\cite{AlvianoEtAl15} solver
that is available on github\footnote{The repository can be found at~\url{https://github.com/alviano/wasp/tree/unsatproof}.}.
The solver prints a proof for inconsistency in the file \textsf{proof.log} if the solver gets passed the program options \textsf{-\,-disable-simplifications -\,-proof-log=proof.log}. 
Actually wasp prints an \ASPDRUPE derivation also in the positive case of consistency. 
This derivation can still be used to verify whether the nogoods learned by the solver are correct. %
Currently, proof logging is restricted to normal programs \FIX{and we do not yet support
recursive weight rules due to discrepancy among different semantics as discussed in the preliminaries.}
Moreover, we had to introduce a normalized form because of several (in-processing) simplifications that would otherwise require major refactoring to isolate.
Just to mention one of these simplifications, a rule of the form $a \hsep \ell_1, \ldots, \ell_n, \neg a$ is replaced by the integrity constraint $\hsep \ell_1, \ldots, \ell_n, \neg a$.
While these simplifications are required to achieve efficient computation, they alter the completion of the input program.
Therefore, wasp cannot log in the proof the auxiliary atoms required to keep the completion compact.
The problem is circumvented by introducing a normalized form such that the completion can be compactly computed without introducing any auxiliary atoms.

A program~$\prog$ is in
\emph{short-body normalized form} if for each atom~$a\in\at(\prog)$
either~$\Card{\ebs(\prog,a)}\leq1$, or
for any body~$B\in\ebs(\prog,a)$, we have $\Card{B}\leq 1$. 
Any normal program can be transformed into short-body normalized form in linear time 
by introducing a linear
number of auxiliary atoms (in the program size). %
This normalized form allows us to get rid of auxiliary variables for bodies~$B\in\bodies(\prog)$, i.e., we can set~$\Delta_\prog^\compdef=\emptyset$, and replace~$\T B$ in~$\Delta_\prog^\leftarrow$ by~$\ba(\prog,B)$, and~$\F B$ in~$\Delta_\prog^\rightarrow$ by~$\lcompl{\ba(\prog,B)}$.
For simplification and increased readability of a compact resulting proof log,
we further do not use neither completion rule addition, nor completion support addition types. 
Instead, we assume that the checker is aware of the completion from the beginning.
\FIX{In this respect}, we have to observe that for weight rules, completion might be exponential in the program size.
Therefore, we require that the checker is equipped with a propagator~\cite{AlvianoDodaroMaratea18} for drawing conclusions
by unit-propagation on parts of the completion associated with weight rules.
We provide an implementation of such a checker tool as well\footnote{Both the checker tool and a tool for bringing normal logic programs in short-body normalized form can be found at~\url{https://github.com/alviano/python/tree/master/asp-proof}.}. %

\longversion{
\section{Extensions for Disjunctive ASP \& Optimization}\label{sec:extensions}
\todoi{maybe rewrite concerning disjunction?}
In the following, we describe the changes that are necessary to handle
disjunctive programs. Further, we provide an outlook on cost
optimization. %

\subsection{Disjunctive Programs}
We can extend \ASPDRUPE to a disjunctive program~$\prog$ as follows.
A \emph{proof step for disjunctive programs} is a quadruple
$(t, \delta, a, A)$, where
$t \in \{ \proofsteploop, \proofstepadd, \proofstepext, \proofstepdel,
\proofstepunfound \}$, $\delta$ is an assignment, $a$ is an atom
or~$\dummyvar$ if unused, and~$A$ is also an assignment or the empty
assignment if unused.  Then, a proof step $\sigma_i$ of an \ASPDRUPE
derivation for disjunctive programs is either (i)~an \ASPDRUPE
derivation proof step for normal programs where triples are extended
to quadruples with last position~$A=\emptyset$ or (ii)~an
\emph{unfounded set addition}, which is a proof step of the form
$(\proofstepunfound, \{ \T a_1, \ldots, \T a_k \},$ $\epsilon, A)$ for
some unfounded set $U = \{ a_1, \ldots, a_k \}$ and assignment~$A$
such that~$A^\T\cap U\neq\emptyset$. %
This step represents the addition of~$A$ as nogood, as~$A$ can be
excluded~\cite{Faber05}.
We adapt the definition of~$\nabla_i(\Pi)$ for~$1\leq i\leq n$:
\begin{align*}
    \nabla_{i}(\Pi) &\coloneqq
    \begin{cases}
        \ldots
            & \ldots,
            \\
            \nabla_{i-1}(\Pi) \msum \{ A \} &
            \text{if $\sigma_i = (\proofstepunfound, \{\T a_1, \ldots, \T a_k\}, \dummyvar, A)$.}
    \end{cases}
\end{align*}

\noindent Further, $\Xi_\prog\eqdef \{A \mid U \text{ is an unfounded set for } A \text{ in } \prog, A^\T\cap U \neq\emptyset\}$.

\future{Loops as before.
But now there are additional unfounded sets that don't correspond to loops.
This unfounded set check is NP-complete and usually implemented as post-check once a model has been found.
No unfounded set => we have an answer set (this case is not interesting for us here).
Unfounded set => add nogood to exclude it and continue.

So what we can do for \ASPDRUPE is to add a line $\mathtt{u}~l_1~\ldots~l_n~\mathtt{0}$ to the proof format
which corresponds to these more complicated unfounded sets.
The unfounded set itself is the yes-certificate for the NP-complete unfounded set check and thus can be checked in polynomial time. (exact algorithm not given here?)
But we should still keep the loop nogood check as a simpler/faster check.}

\begin{lemma}[Invariants for Disjunctive Programs]\label{lem:invariant_disj_del} 
  Let $\prog$ be a disjunctive program,
  $\langle \sigma_1, \ldots, \sigma_n \rangle$ a finite \ASPDRUPE
  derivation for program $P$, and $D_{i}$ be the accumulated set of
  nogoods introduced by the extension rules in $\sigma_i$ for all
  $i \in \{1, \ldots, k\}$. Then, it holds that
  $\Delta_\prog \cup \Xi_\prog \cup D_i \models \nabla_i(\Pi) \text{ for all } i \in \{0, \ldots, n\}$
\end{lemma}
\begin{proof}[Proof (Idea)]
The proof for the missing unfounded set addition step case is similar to the loop addition case in the proof of Lemma~\ref{lem:invariant_rup_del}. 
\end{proof}

\begin{theorem}[Correctness for Disjunctive Programs]
  A disjunctive logic program $P$ is inconsistent if and only if there exists an \ASPDRUPE proof for $P$.
\end{theorem}
\begin{proof}[Proof (Idea)]
  We proceed analogously to the proof of Theorem~\ref{ref:normalasp},
  but additionally rely on~$\Xi_\prog$ and on the characterization of
  answer sets for disjunctive programs~\cite{Faber05}. In particular,
  $A^\T\cap\at(\prog)$ is an answer set of~$\prog$ if and only
  if~$A\models \Delta_\prog\cup\Lambda_\prog\cup\Xi_\prog$, since
  every loop~$U\in\lop(\prog)$ is an unfounded set for~$\lambda(a,U)$
  with $a\in U$. 
\end{proof}

Note that the checker in Listing~\ref{algo:proof-checker} can be extended easily.
The algorithm {CDNL-\ASPDRUPE} of Listing~\ref{algo:CDNL-ASP-RAT} can
also be extended to disjunctive programs.  Therefore, it needs an
additional unfounded set check in Line~\ref{ln:aset}, where, if an
unfounded set~$U$ for assignment~$A$ is found, $A$ is added to
$\nabla$, and the unfounded set addition step for~$U$ and~$A$ is added
to~$\Pi$ and the solver continues.  Analogously, the checker in
Listing~\ref{algo:proof-checker} can be extended accordingly, since
the claimed unfounded sets can be verified in polynomial time.
}

\section{\ASPDRUPE Implementational Specifications}\label{sec:implspec}
Next, we discuss the specific format description of~\ASPDRUPE that we
think can be commonly used in ASP solvers.
To this end, we assume a program~$\prog$ and a set~$\{\hat B\mid B\in \bodies(\prog)\}$ of fresh atoms, where we have one fresh atom for each induced body in~$\prog$. 
Further, let~$\vm: (\at(\prog)\cup\{\hat B\mid B\in\bodies(\prog)\}) \rightarrow \Nat$ be an injective mapping of atoms~$a\in\at(\prog) \cup\{\hat B\mid B\in\bodies(\prog)\}$ to a unique positive integer~$n\geq 1$ such that~$\vm(a) \eqdef n$, and $a = \vm^{-1}(n)$.
\FIX{Note that for atoms~$a\in\at(\prog)$ this can be (partly) already provided by the input format.}
However, for technical reasons, we assume such a mapping also for atom~$\hat B$, where $B\in\bodies(\prog)$, as these integers will then correspond to fresh atoms.
We define in the following an SModels-like~\cite{lparse} output format of strings for a given program~$\prog$, which is ready for the checker to parse. 
\FIX{Actually, the output format is ``line-based'', i.e., it is even ASPIF-like~\cite{GebserEtAl16}. However, \ASPDRUPE still supports ASP only, and not ASP solving with theory reasoning.}
To this end, let the \emph{truth value mapping}~$\outn$ map a variable assignment~$l$ to an integer different from 0, where a positive integer represents an atom and a negative integer a negated atom.
\begin{align*}
    \outn(l) &\coloneqq
    \begin{cases}
    	\vm(X),
            & \text{if $l=\T Y$ and~$X=Y$ is an atom or~$X=\hat Y$ for~$Y\in\bodies(\prog)$,}
            \\
    	-\vm(X),
            & \text{if $l=\F Y$ and~$X=Y$ is an atom or~$X=\hat Y$ for~$Y\in\bodies(\prog)$.}
    \end{cases}
\end{align*}
Then, the~\emph{\ASPDRUPE output format} is a sequence~$\zeta=\langle s_1,\ldots,s_j \rangle$ of strings,
where each element in the sequence corresponds to one rule in an~\ASPDRUPE derivation and is terminated by character~``0''.
Each part of an element in the sequence is separated by a white space (\text{\textvisiblespace}). We indicate
other strings constants   by~$'string'$.
Then, element~$s_i$ of the sequence~$\zeta$ for~$1\leq i \leq j$ is of the following form.

\begin{itemize}
        \item
        A \emph{body definition string} is of the form $'\proofstepbody'\text{\textvisiblespace}{}b_1\text{\textvisiblespace}{}n_1\text{\textvisiblespace}{}\ldots\text{\textvisiblespace}{}n_k\text{\textvisiblespace}{}'0'$ such that $b_1$, $n_1$, $\ldots$, $n_k\in\Nat$.
	Further, we require that $\{\outn^{-1}(n_1), \ldots, \outn^{-1}(n_k)\}=\ba(\prog,B)$ for some~$B \in \bodies(\prog)$.
        Finally, for $s_i$ the string corresponds to the proof step~$(\proofstepext, \{\T B\}, {\hat B})$, where~${\hat B}=\vm^{-1}(b_1)$.
        The technical purpose of~$s_i$ is to specify the fresh body variable~$\hat B$ for a body~$B\in\bodies(\prog)$. %
        \item
        An \emph{addition string} is of the form $'\proofstepadd'\text{\textvisiblespace}{}n_1\text{\textvisiblespace}{}\ldots\text{\textvisiblespace}{}n_k\text{\textvisiblespace}{}'0'$ such that~$n_1,\ldots,n_k\in\Nat$, which corresponds to proof step~$(\proofstepadd, \{\outn^{-1}(n_1), \ldots, \outn^{-1}(n_k)\}, \dummyvar)$.
    \item
	A \emph{completion rule addition string} is of the form $'\proofstepcomp'\text{\textvisiblespace}{}b_1\text{\textvisiblespace}{}n_1\text{\textvisiblespace}{}\ldots\text{\textvisiblespace}{}n_k\text{\textvisiblespace}{}'0'$ such that~$b_1,n_1,\ldots,n_k\in\Nat$ and~${\hat B}=\vm^{-1}(b_1)$, which relates to proof step~$(\proofstepcomp, \{\F \vm^{-1}(n_1), \ldots, \F \vm^{-1}(n_k), \T B\},\dummyvar)$.
    \item
	A \emph{completion support addition string} is of the form $'\proofstepcompsup'\text{\textvisiblespace}{}n_1\text{\textvisiblespace}{}b_1\text{\textvisiblespace}{}\ldots\text{\textvisiblespace}{}b_k\text{\textvisiblespace}{}'0'$ such that~$n_0,b_1,\ldots,$ $b_k\in\Nat$ and~$\hat B_i=\vm^{-1}(b_i)$ for~$i\in\{1,\ldots,k\}$, 
	which corresponds to proof step~$(\proofstepcompsup, \{\F B_1,\ldots,$ $\F B_k\}, \vm^{-1}(n_0))$.
    \item
        An \emph{extension string} is of the form $'\proofstepext'\text{\textvisiblespace}{}n_0\text{\textvisiblespace}{}n_1\text{\textvisiblespace}{}\ldots\text{\textvisiblespace}{}n_k\text{\textvisiblespace}{}'0'$ such that~$n_0$, $n_1$, $\ldots$, $n_k\in\Nat$, which corresponds to proof step~$(\proofstepext, \{\outn^{-1}(n_1), \ldots, \outn^{-1}(n_k)\}, \vm^{-1}(n_0))$.
        \item
        A \emph{deletion string} is of the form $'\proofstepdel'\text{\textvisiblespace}{}n_1\text{\textvisiblespace}{}\ldots\text{\textvisiblespace}{}n_k\text{\textvisiblespace}{}'0'$ such that~$n_1,\ldots,n_k\in\Nat$, which corresponds to proof step~$(\proofstepdel, \{\outn^{-1}(n_1), \ldots, \outn^{-1}(n_k)\},\dummyvar)$.
        \item
        A \emph{loop addition string} is of the form $'\proofsteploop'\text{\textvisiblespace}{}n_1\text{\textvisiblespace}{}\ldots\text{\textvisiblespace}{}n_k\text{\textvisiblespace}{}'0'$ such that~$n_1$, $\ldots$, $n_k\in\Nat$, which corresponds to proof step~$(\proofsteploop, \{\T\vm^{-1}(n_1), \ldots, \T\vm^{-1}(n_k)\},\vm^{-1}(n_1))$.
        \item An \emph{unfounded set addition string} is of the form $'\proofstepunfound'\text{\textvisiblespace}{}k\text{\textvisiblespace}{}n_1\text{\textvisiblespace}{}\ldots\text{\textvisiblespace}{}n_k\text{\textvisiblespace}{}o_1\text{\textvisiblespace}{}\ldots\text{\textvisiblespace}{}o_m\text{\textvisiblespace}{}'0'$ such that~$n_1,$ $\ldots,n_k,o_1,\ldots,o_m\in\Nat$, which then corresponds to proof step $(\proofstepunfound, \{\T\vm^{-1}(n_1), \ldots, \T\vm^{-1}(n_k)$ $\},\dummyvar, \{\outn^{-1}(o_1), \ldots, \outn^{-1}(o_m)\})$.
\end{itemize}

Next, we define how to obtain the \ASPDRUPE output format~$\zeta=\smof(\Pi)$ of a given \ASPDRUPE derivation~$\Pi=\langle\sigma_1,\ldots,\sigma_n\rangle$.
To this end, we  define~$s=\smof(\sigma_i)$, which transforms a proof step~$\sigma_i$ into a string~$s$ for~$1\leq i\leq n$, by slight abuse of notation.
\begin{align*}
    \smof(\sigma_i) &\coloneqq
    \begin{cases}
	\mathsf{'\proofstepadd'\text{\textvisiblespace}{}\outn(l_1)\text{\textvisiblespace}{}\ldots\text{\textvisiblespace}{}\outn(l_k)\text{\textvisiblespace}{}'0'},
            & \text{if $\sigma_i = (\proofstepadd, \{l_1, \ldots, l_k\}, \dummyvar)$,}
            \\
	\mathsf{'\proofstepcomp'\text{\textvisiblespace}{}\vm(\hat B)\text{\textvisiblespace}{}\vm(a_1)\text{\textvisiblespace}{}\ldots\text{\textvisiblespace}{}\vm(l_k)\text{\textvisiblespace}{}'0'},
            & \text{if $\sigma_i = (\proofstepcomp, \{\F a_1, \ldots, \F a_k, \T B\}, \dummyvar)$,}
            \\
	\mathsf{'\proofstepcompsup'\text{\textvisiblespace}{}\vm(a)\text{\textvisiblespace}{}\vm(B_1)\text{\textvisiblespace}{}\ldots\text{\textvisiblespace}{}\vm(B_k)\text{\textvisiblespace}{}'0'},
            & \text{if $\sigma_i = (\proofstepcompsup, \{\F B_1, \ldots, \F B_k\}, a)$,}
            \\
    	\mathsf{'\proofstepext'\text{\textvisiblespace}{}\vm(a)\text{\textvisiblespace}{}\outn(l_1)\text{\textvisiblespace}{}\ldots\text{\textvisiblespace}{}\outn(l_k)\text{\textvisiblespace}{}'0'},
            & \text{if $\sigma_i = (\proofstepext, \{l_1, \ldots, l_k\}, a)$,}
            \\
    	\mathsf{'\proofstepdel'\text{\textvisiblespace}{}\outn(l_1)\text{\textvisiblespace}{}\ldots\text{\textvisiblespace}{}\outn(l_k)\text{\textvisiblespace}{}'0'},
            & \text{if $\sigma_i = (\proofstepdel, \{l_1, \ldots, l_k\}, \dummyvar)$,}
	    \\
	\mathsf{'\proofsteploop'\text{\textvisiblespace}{}\vm(a_1)\text{\textvisiblespace}{}\ldots\text{\textvisiblespace}{}\vm(a_k)\text{\textvisiblespace}{}'0'},
            & \text{if $\sigma_i = (\proofsteploop, \{\T a_1, \ldots, \T a_k\}, a_1)$,}
            \\
	\mathsf{'\proofstepunfound'\text{\textvisiblespace}{}k\text{\textvisiblespace}{}\vm(a_1)\text{\textvisiblespace}{}\ldots\text{\textvisiblespace}{}\vm(a_k)}, & \text{if $\sigma_i = (\proofstepunfound, \{\T a_1, \ldots, \T a_k\}, \dummyvar, $}
\\
	\quad\quad\;\mathsf{\outn(l_1)\text{\textvisiblespace}{}\ldots\text{\textvisiblespace}{}\outn(l_m)\text{\textvisiblespace}{}'0'},
	    & \quad\qquad\qquad\text{$\{l_1, \ldots, l_m\})$.}
    \end{cases}
\end{align*}
Since fresh body atoms require introduction using extension proof steps
in advance, we assume~$\bodies(\prog)=\{B_1,\ldots,B_q\}$, where~$B_i=\{l_{i,1}, \ldots l_{i,\Card{B_i}}\}$ for~$1\leq i\leq q$.
Finally, let~$\smof(\Pi) \eqdef \langle'\proofstepbody'\text{\textvisiblespace}{}\vm(\hat{B}_1)\text{\textvisiblespace}{}\outn(l_{1,1})\text{\textvisiblespace}{}\ldots\text{\textvisiblespace}{}\outn(l_{1,\Card{B_1}})\text{\textvisiblespace}{}'0'\rangle \seqconcat \ldots \seqconcat \langle'\proofstepbody'\text{\textvisiblespace}{}\vm(\hat{B}_q)\text{\textvisiblespace}{}\outn(l_{q,1})\text{\textvisiblespace}{}\ldots\text{\textvisiblespace}{}\outn(l_{q,\Card{B_q}})\text{\textvisiblespace}{}'0' \rangle  \seqconcat  \smof(\sigma_1) \seqconcat \ldots \seqconcat \smof(\sigma_n)$.
As a simplification, one can leave out additional, unused body definition strings.

\begin{example}
    Consider the \ASPDRUPE proof~$\Pi$ for inconsistency of~$\prog$ from Example~\ref{ex:proof} and
    assume the dictionary in the program input assigns
    $\vm(a) = 1$, $\vm(b) = 2$, $\vm(c) = 3$, $\vm(d) = 4$, and $\vm(e) = 5$.
    We extend this to the necessary bodies:
    $\vm( \{c\} ) = 6$,
    $\vm( \{\naf c\} ) = 7$,
    $\vm( \{\naf d\} ) = 8$,
    $\vm( \{a, d\} ) = 9$,
    $\vm( \{b, d\} ) = 10$,
    $\vm( \{c, d\} ) = 11$,
    $\vm( \{\naf a, \naf e\} ) = 12$, and
    $\vm( \{c, \naf e\} ) = 13$.
    Figure~\ref{fig:output} corresponds to~$\smof(\Pi)$.
    Note that we use body definitions for required body variables.
\end{example}

\begin{figure}    \noindent%
    \begin{minipage}{.3\textwidth}
        \begin{aspdrupe}
        b 6 3 0
        b 7 -3 0
        b 8 -4 0
        b 9 1 4 0
        b 10 2 4 0
        b 11 3 4 0
        b 12 -1 -5 0
        b 13 3 -5 0
        \end{aspdrupe}
    \end{minipage}\hfill%
    \begin{minipage}{.3\textwidth}
        \begin{aspdrupe}
        s 1 10 6 0
	c 8 3 0
	c 7 4 0
        c 9 2 0
        s 5 13 12 0
        l 1 2 0
        a -6 1 0
        s 3 8 0
        s 4 7 0
        \end{aspdrupe}
    \end{minipage}\hfill%
    \begin{minipage}{.3\textwidth}
        \begin{aspdrupe}
        s 2 9 11 0
	a 5 -12 0
	c 13 5 0
        a 1 0
        c 10 1 0
        c 6 1 0
        a 5 0
        c 12 5 0
        a 0
        \end{aspdrupe}
    \end{minipage}
    \caption{\ASPDRUPE output format~$\smof(\Pi)$ of the proof~$\Pi$ of Example~~\ref{ex:proof}.}
    \label{fig:output}
\end{figure}

\future{
As a technical detail for practical reasons, we add a fourth type of line to our proof format:
Body definitions: $\mathtt{b}~j~l_1~\ldots~l_k~\mathtt{0}$

Clark's Completion introduces new variables for rule bodies.
These are not represented in the input dictionary, but may appear in learned nogoods.
Therefore we need to make the mapping explicit in the proof.
The line above states that variable~$j$ represents the body~$\{ l_1,\ldots,l_k \}$.

The verifier needs to check that $j$ is a fresh variable (i.e., $j$ must not appear anywhere earlier in the proof),
and optionally also whether $\{ l_1,\ldots,l_k \}$ really corresponds to some body in the program (but if no such body exists, the definition is useless but shouldn't impact correctness).

(alternative syntax would also be possible. instead of body definitions, we could just output the whole body enclosed in curly braces whenever a body variable appears)

Verifier can either scan the file twice (first to gather the body definitions to build completion with it, and second to actually verify the proof);
or it keeps a mapping of defined variables to its own internally used ones and updates it whenever a new body definition appears in the input.
}

\future{
TODO:
Describe how we can handle choice rules, weight rules by modifying the completion.
Fundamental question:
Now I don't completely understand why we need to talk about how to transform choice rules and weight rules?
Because, if we want to verify e.g. clasp-runs, wouldn't we have to adapt it to clasp's internal translation anyways?

Example: one ASP solver might transform choice rules to nogoods as described here, then everything is fine.
Another ASP solver might transform it to a set of normal rules and then create nogoods. Then our verification would not work: because the ASP solver and the verifier start from a different initial set of nogoods. So the learned clauses are different.
}

\friedhof{
\subsection{Choice Rules.}
Example:
\[
    \{ a, b, c \} \leftarrow e, \naf d.
\]

Directly transform choice rules into nogoods.
For this example (where $r$ is a new variable representing the body of the rule):
\begin{itemize}
    \item
        We need nogoods to state the equivalence of the body and $r$:
        \[ \{ \F r, \T e, \F d \}, \{ \T r, \F e \}, \{ \T r, \T d \} . \]
    \item
        The support nogoods for the atoms $a,b,c$ must include $r$. (just define externally supporting bodies with $a \in H(r)$, then it should work for disjunctive rules as well)
    \item
        In contrast to basic rules, there are \emph{no} nogoods that force head variables to be true when the body is true.
\end{itemize}
(In summary, the nogoods for choice rules are almost the same as for basic rules,
except that we leave out the part which encodes the ``rule as implication''.
Choice rules only provide support for their head atoms.)

\subsubsection{Weight Rules.}

Expand weight rules into multiple rules as done in the ASP book.
(Are we sure that the expansion cannot be exponential?)
}

\section{Conclusion \& Future Work}
ASP solvers are highly-tuned decision procedures that are widely applied in academia and industry.
In this paper, we considered  how to ensure that if an ASP solver outputs that a program has no answer set then the solver is indeed right.
Similar to unsat certificates in SAT solvers, we propose an approach that  augments the inconsistency answer of an ASP solver with a certificate of inconsistency.
This approach allows the use of unverified, efficient ASP solvers while guaranteeing that a particular run of an ASP solver has been correct.

To this end, we developed a new proof format called~\ASPDRUPE. It allows several types of proof steps: \emph{(RUP) addition} that models nogood
learning, \emph{completion rule addition} and \emph{completion support addition} for adding completion rules on demand, \emph{deletion} that models nogood forgetting,
\emph{extension} that allows to infer new definitions and %
\emph{loop addition} that %
adds nogoods to forbid assignments that do not correspond to answer
sets.
\ASPDRUPE supports formula simplification methods that can be obtained by learning entailed nogoods, nogood deletion as well as extended resolution.
We established that \ASPDRUPE is sound and complete for logic programs
and can be used effectively,~i.e., a program $P$ is inconsistent if
and only if a \ASPDRUPE proof of $P$ exists and  that we can check an \ASPDRUPE proof in polynomial time of the proof length. %
Further, we demonstrated how to augment CDNL-based solvers with proof logging.
\FIX{It is in our interest for future work to continue this line of research.
Potential next steps include the study of theory reasoning towards covering the full ASPIF intermediate format.}

Finally, we would like to say a few words about RAT-style proofs.
The combination of nogood deletion, loop, and RAT addition results in an inconsistent proof system in which we can infer a conflict although a non-tight program is consistent.
This stems from the situation that clauses that are RAT with respect to $\nabla$ are not necessarily RAT with respect to $\Delta_\prog \cup \Lambda_\prog$.
Although it was recently shown that extended resolution simulates DRAT~\cite{KieslEtAl18},
we are unaware whether  \ASPDRUPE can be extended to RAT such that each rule application can be checked in polynomial time, which we believe is an interesting question for future work.

\shortversion{
\bibliography{asp_rat_refs}
}

\longversion{
\bibliography{asp_rat_refs}}

\begin{thebibliography}{}

\bibitem[\protect\citeauthoryear{Alviano, Dodaro, Leone, and Ricca}{Alviano
  et~al\mbox{.}}{2015}]{AlvianoEtAl15}
{\sc Alviano, M.}, {\sc Dodaro, C.}, {\sc Leone, N.}, {\sc and} {\sc Ricca, F.}
  2015.
\newblock Advances in {WASP}.
\newblock In {\em {LPNMR} 2015}. LNCS, vol. 9345. Springer, 40--54.

\bibitem[\protect\citeauthoryear{Alviano, Dodaro, and Maratea}{Alviano
  et~al\mbox{.}}{2018}]{AlvianoDodaroMaratea18}
{\sc Alviano, M.}, {\sc Dodaro, C.}, {\sc and} {\sc Maratea, M.} 2018.
\newblock Shared aggregate sets in answer set programming.
\newblock {\em {TPLP}\/}~{\em 18,\/}~3-4, 301--318.

\bibitem[\protect\citeauthoryear{Balduccini, Gelfond, and Nogueira}{Balduccini
  et~al\mbox{.}}{2006}]{BalducciniGelfondNogueira06a}
{\sc Balduccini, M.}, {\sc Gelfond, M.}, {\sc and} {\sc Nogueira, M.} 2006.
\newblock Answer set based design of knowledge systems.
\newblock {\em Ann. Math. Artif. Intell.\/}~{\em 47,\/}~1-2, 183--219.

\bibitem[\protect\citeauthoryear{Bomanson, Gebser, Janhunen, Kaufmann, and
  Schaub}{Bomanson et~al\mbox{.}}{2016}]{BomansonEtAl16}
{\sc Bomanson, J.}, {\sc Gebser, M.}, {\sc Janhunen, T.}, {\sc Kaufmann, B.},
  {\sc and} {\sc Schaub, T.} 2016.
\newblock Answer set programming modulo acyclicity.
\newblock {\em Fundam. Inform.\/}~{\em 147,\/}~1, 63--91.

\bibitem[\protect\citeauthoryear{Brewka, Eiter, and Truszczy{\'n}ski}{Brewka
  et~al\mbox{.}}{2011}]{BrewkaEiterTruszczynski11}
{\sc Brewka, G.}, {\sc Eiter, T.}, {\sc and} {\sc Truszczy{\'n}ski, M.} 2011.
\newblock Answer set programming at a glance.
\newblock {\em Communications of the ACM\/}~{\em 54,\/}~12, 92--103.

\bibitem[\protect\citeauthoryear{Calimeri, Faber, Gebser, Ianni, Kaminski,
  Krennwallner, Leone, Ricca, and Schaub}{Calimeri
  et~al\mbox{.}}{2015}]{aspcore2}
{\sc Calimeri, F.}, {\sc Faber, W.}, {\sc Gebser, M.}, {\sc Ianni, G.}, {\sc
  Kaminski, R.}, {\sc Krennwallner, T.}, {\sc Leone, N.}, {\sc Ricca, F.}, {\sc
  and} {\sc Schaub, T.} 2015.
\newblock {ASP}-core-2 input language format.

\bibitem[\protect\citeauthoryear{Clark}{Clark}{1977}]{Clark77}
{\sc Clark, K.~L.} 1977.
\newblock Negation as failure.
\newblock In {\em Symposium on Logic and Data Bases 1977}. Advances in Data
  Base Theory. Plemum Press, 293--322.

\bibitem[\protect\citeauthoryear{Cruz{-}Filipe, Heule, Jr., Kaufmann, and
  Schneider{-}Kamp}{Cruz{-}Filipe et~al\mbox{.}}{2017}]{CruzFilipeEtAl17}
{\sc Cruz{-}Filipe, L.}, {\sc Heule, M. J.~H.}, {\sc Jr., W. A.~H.}, {\sc
  Kaufmann, M.}, {\sc and} {\sc Schneider{-}Kamp, P.} 2017.
\newblock Efficient certified {RAT} verification.
\newblock In {\em {CADE} 2017}. LNCS, vol. 10395. Springer, 220--236.

\bibitem[\protect\citeauthoryear{E{\'{e}}n and Biere}{E{\'{e}}n and
  Biere}{2005}]{EenBiere05}
{\sc E{\'{e}}n, N.} {\sc and} {\sc Biere, A.} 2005.
\newblock Effective preprocessing in {SAT} through variable and clause
  elimination.
\newblock In {\em SAT 2005}. LNCS, vol. 3569. Springer, 61--75.

\bibitem[\protect\citeauthoryear{Faber}{Faber}{2005}]{Faber05}
{\sc Faber, W.} 2005.
\newblock Unfounded sets for disjunctive logic programs with arbitrary
  aggregates.
\newblock In {\em {LPNMR 2005}}. LNCS, vol. 3662. Springer, 40--52.

\bibitem[\protect\citeauthoryear{Faber, Pfeifer, and Leone}{Faber
  et~al\mbox{.}}{2011}]{FaberPfeiferLeone11}
{\sc Faber, W.}, {\sc Pfeifer, G.}, {\sc and} {\sc Leone, N.} 2011.
\newblock Semantics and complexity of recursive aggregates in answer set
  programming.
\newblock {\em Artif. Intell.\/}~{\em 175,\/}~1, 278--298.

\bibitem[\protect\citeauthoryear{Fages}{Fages}{1994}]{Fages94}
{\sc Fages, F.} 1994.
\newblock Consistency of clark's completion and existence of stable models.
\newblock {\em Meth. of Logic in {CS}\/}~{\em 1,\/}~1, 51--60.

\bibitem[\protect\citeauthoryear{Ferraris}{Ferraris}{2011}]{Ferraris11}
{\sc Ferraris, P.} 2011.
\newblock Logic programs with propositional connectives and aggregates.
\newblock {\em {ACM} Trans. Comput. Log.\/}~{\em 12,\/}~4, 25:1--25:40.

\bibitem[\protect\citeauthoryear{Gebser, Guziolowski, Ivanchev, Schaub, Siegel,
  Thiele, and Veber}{Gebser et~al\mbox{.}}{2010}]{GebserEtAl10}
{\sc Gebser, M.}, {\sc Guziolowski, C.}, {\sc Ivanchev, M.}, {\sc Schaub, T.},
  {\sc Siegel, A.}, {\sc Thiele, S.}, {\sc and} {\sc Veber, P.} 2010.
\newblock Repair and prediction (under inconsistency) in large biological
  networks with answer set programming.
\newblock In {\em KR 2010}. The AAAI Press, 497--507.

\bibitem[\protect\citeauthoryear{Gebser, Kaminski, Kaufmann, Ostrowski, Schaub,
  and Wanko}{Gebser et~al\mbox{.}}{2016}]{GebserEtAl16}
{\sc Gebser, M.}, {\sc Kaminski, R.}, {\sc Kaufmann, B.}, {\sc Ostrowski, M.},
  {\sc Schaub, T.}, {\sc and} {\sc Wanko, P.} 2016.
\newblock Theory solving made easy with clingo 5.
\newblock In {\em {ICLP} (Technical Communications)}. {OASICS}, vol.~52.
  Schloss Dagstuhl - Leibniz-Zentrum fuer Informatik, 2:1--2:15.

\bibitem[\protect\citeauthoryear{Gebser, Kaminski, Kaufmann, and Schaub}{Gebser
  et~al\mbox{.}}{2011}]{Gebser2011a}
{\sc Gebser, M.}, {\sc Kaminski, R.}, {\sc Kaufmann, B.}, {\sc and} {\sc
  Schaub, T.} 2011.
\newblock {Multi-Criteria Optimization in Answer Set Programming}.
\newblock In {\em ICLP 2011}. LIPIcs, vol.~11. Dagstuhl Publishing, 1--10.

\bibitem[\protect\citeauthoryear{Gebser, Kaminski, Kaufmann, and Schaub}{Gebser
  et~al\mbox{.}}{2012}]{GebserEtAl12}
{\sc Gebser, M.}, {\sc Kaminski, R.}, {\sc Kaufmann, B.}, {\sc and} {\sc
  Schaub, T.} 2012.
\newblock {\em Answer Set Solving in Practice}.
\newblock Synthesis Lectures on Artificial Intelligence and Machine Learning.
  Morgan {\&} Claypool Publishers.

\bibitem[\protect\citeauthoryear{Gebser, Kaufmann, Neumann, and Schaub}{Gebser
  et~al\mbox{.}}{2008}]{GebserEtAl08}
{\sc Gebser, M.}, {\sc Kaufmann, B.}, {\sc Neumann, A.}, {\sc and} {\sc Schaub,
  T.} 2008.
\newblock Advanced preprocessing for answer set solving.
\newblock In {\em {ECAI} 2008}. Frontiers in Artificial Intelligence and
  Applications, vol. 178. {IOS} Press, 15--19.

\bibitem[\protect\citeauthoryear{Gebser, Obermeier, Ratsch{-}Heitmann, Runge,
  and Schaub}{Gebser et~al\mbox{.}}{2018}]{GebserEtAl18a}
{\sc Gebser, M.}, {\sc Obermeier, P.}, {\sc Ratsch{-}Heitmann, M.}, {\sc Runge,
  M.}, {\sc and} {\sc Schaub, T.} 2018.
\newblock Routing driverless transport vehicles in car assembly with answer set
  programming.
\newblock {\em CoRR\/}~{\em abs/1804.10437}.

\bibitem[\protect\citeauthoryear{Gebser, Schaub, Thiele, and Veber}{Gebser
  et~al\mbox{.}}{2011}]{GebserSchaubThieleVeber11}
{\sc Gebser, M.}, {\sc Schaub, T.}, {\sc Thiele, S.}, {\sc and} {\sc Veber, P.}
  2011.
\newblock Detecting inconsistencies in large biological networks with answer
  set programming.
\newblock {\em TPLP\/}~{\em 11,\/}~2-3, 323--360.

\bibitem[\protect\citeauthoryear{Gelder}{Gelder}{2008}]{Gelder08}
{\sc Gelder, A.~V.} 2008.
\newblock Verifying {RUP} proofs of propositional unsatisfiability.
\newblock In {\em International Symposium on Artificial Intelligence and
  Mathematics, {ISAIM} 2008, Fort Lauderdale, Florida, USA, January 2-4, 2008}.

\bibitem[\protect\citeauthoryear{Gelfond and Zhang}{Gelfond and
  Zhang}{2014}]{GelfondZhang14}
{\sc Gelfond, M.} {\sc and} {\sc Zhang, Y.} 2014.
\newblock Vicious circle principle and logic programs with aggregates.
\newblock {\em {TPLP}\/}~{\em 14,\/}~4-5, 587--601.

\bibitem[\protect\citeauthoryear{Goldberg and Novikov}{Goldberg and
  Novikov}{2003}]{GoldbergNovikov03}
{\sc Goldberg, E.~I.} {\sc and} {\sc Novikov, Y.} 2003.
\newblock Verification of proofs of unsatisfiability for {CNF} formulas.
\newblock In {\em {DATE}}. {IEEE} Computer Society, 10886--10891.

\bibitem[\protect\citeauthoryear{Guziolowski, Videla, Eduati, Thiele, Cokelaer,
  Siegel, and Saez-Rodriguez}{Guziolowski
  et~al\mbox{.}}{2013}]{GuziolowskiEtAl13a}
{\sc Guziolowski, C.}, {\sc Videla, S.}, {\sc Eduati, F.}, {\sc Thiele, S.},
  {\sc Cokelaer, T.}, {\sc Siegel, A.}, {\sc and} {\sc Saez-Rodriguez, J.}
  2013.
\newblock Exhaustively characterizing feasible logic models of a signaling
  network using answer set programming.
\newblock {\em Bioinformatics\/}~{\em 29,\/}~18, 2320--2326.
\newblock Erratum see Bioinformatics 30, 13, 1942.

\bibitem[\protect\citeauthoryear{Haubelt, Neubauer, Schaub, and Wanko}{Haubelt
  et~al\mbox{.}}{2018}]{HaubeltNeubauerSchaub18a}
{\sc Haubelt, C.}, {\sc Neubauer, K.}, {\sc Schaub, T.}, {\sc and} {\sc Wanko,
  P.} 2018.
\newblock Design space exploration with answer set programming.
\newblock {\em KI - K{\"u}nstliche Intelligenz\/}~{\em 32,\/}~2, 205--206.

\bibitem[\protect\citeauthoryear{Heule, {Hunt Jr.}, and Wetzler}{Heule
  et~al\mbox{.}}{2013}]{HeuleHuntWetzler13}
{\sc Heule, M.}, {\sc {Hunt Jr.}, W.~A.}, {\sc and} {\sc Wetzler, N.} 2013.
\newblock Verifying refutations with extended resolution.
\newblock In {\em {CADE} 2013}. LNCS, vol. 7898. Springer, 345--359.

\bibitem[\protect\citeauthoryear{Heule, {Hunt Jr.}, and Wetzler}{Heule
  et~al\mbox{.}}{2015}]{HeuleHuntWetzler15}
{\sc Heule, M.}, {\sc {Hunt Jr.}, W.~A.}, {\sc and} {\sc Wetzler, N.} 2015.
\newblock Expressing symmetry breaking in {DRAT} proofs.
\newblock In {\em {CADE 2015}}. LNCS, vol. 9195. Springer, 591--606.

\bibitem[\protect\citeauthoryear{Heule, Seidl, and Biere}{Heule
  et~al\mbox{.}}{2014}]{HeuleSeidlBiere14}
{\sc Heule, M.}, {\sc Seidl, M.}, {\sc and} {\sc Biere, A.} 2014.
\newblock A unified proof system for {QBF} preprocessing.
\newblock In {\em {IJCAR}}. LNCS, vol. 8562. Springer, 91--106.

\bibitem[\protect\citeauthoryear{Janhunen}{Janhunen}{2006}]{Janhunen06}
{\sc Janhunen, T.} 2006.
\newblock Some (in)translatability results for normal logic programs and
  propositional theories.
\newblock {\em Journal of Applied Non-Classical Logics\/}~{\em 16,\/}~1-2,
  35--86.

\bibitem[\protect\citeauthoryear{Kiesl, Rebola{-}Pardo, and Heule}{Kiesl
  et~al\mbox{.}}{2018}]{KieslEtAl18}
{\sc Kiesl, B.}, {\sc Rebola{-}Pardo, A.}, {\sc and} {\sc Heule, M. J.~H.}
  2018.
\newblock Extended resolution simulates {DRAT}.
\newblock In {\em {IJCAR} 2018}. LNCS, vol. 10900. Springer, 516--531.

\bibitem[\protect\citeauthoryear{Lin and Zhao}{Lin and Zhao}{2003}]{LinZhao03}
{\sc Lin, F.} {\sc and} {\sc Zhao, J.} 2003.
\newblock On tight logic programs and yet another translation from normal logic
  programs to propositional logic.
\newblock In {\em IJCAI'03}. Morgan Kaufmann, 853--858.

\bibitem[\protect\citeauthoryear{Liu, Pontelli, Son, and Truszczynski}{Liu
  et~al\mbox{.}}{2010}]{LiuEtAl10}
{\sc Liu, L.}, {\sc Pontelli, E.}, {\sc Son, T.~C.}, {\sc and} {\sc
  Truszczynski, M.} 2010.
\newblock Logic programs with abstract constraint atoms: The role of
  computations.
\newblock {\em Artif. Intell.\/}~{\em 174,\/}~3-4, 295--315.

\bibitem[\protect\citeauthoryear{Lonsing and Egly}{Lonsing and
  Egly}{2018}]{LonsingEgly18}
{\sc Lonsing, F.} {\sc and} {\sc Egly, U.} 2018.
\newblock {QRAT+:} generalizing {QRAT} by a more powerful {QBF} redundancy
  property.
\newblock In {\em {IJCAR}}. LNCS, vol. 10900. Springer, 161--177.

\bibitem[\protect\citeauthoryear{Pearce}{Pearce}{1999}]{Pearce99}
{\sc Pearce, D.} 1999.
\newblock Stable inference as intuitionistic validity.
\newblock {\em J. Log. Program.\/}~{\em 38,\/}~1, 79--91.

\bibitem[\protect\citeauthoryear{Pelov, Denecker, and Bruynooghe}{Pelov
  et~al\mbox{.}}{2007}]{PelovDeneckerBruynooghe07}
{\sc Pelov, N.}, {\sc Denecker, M.}, {\sc and} {\sc Bruynooghe, M.} 2007.
\newblock Well-founded and stable semantics of logic programs with aggregates.
\newblock {\em Theory and Practice of Logic Programming\/}~{\em 7,\/}~3,
  301--353.

\bibitem[\protect\citeauthoryear{Philipp and Rebola{-}Pardo}{Philipp and
  Rebola{-}Pardo}{2016}]{PhilippRebola16}
{\sc Philipp, T.} {\sc and} {\sc Rebola{-}Pardo, A.} 2016.
\newblock {DRAT} proofs for {XOR} reasoning.
\newblock In {\em {JELIA} 2016}. LNCS, vol. 10021. Springer, 415--429.

\bibitem[\protect\citeauthoryear{Ricca, Grasso, Alviano, Manna, Lio, Iiritano,
  and Leone}{Ricca et~al\mbox{.}}{2012}]{RiccaEtAl12}
{\sc Ricca, F.}, {\sc Grasso, G.}, {\sc Alviano, M.}, {\sc Manna, M.}, {\sc
  Lio, V.}, {\sc Iiritano, S.}, {\sc and} {\sc Leone, N.} 2012.
\newblock Team-building with answer set programming in the {G}ioia-{T}auro
  seaport.
\newblock {\em TPLP\/}~{\em 12}, 361--381.

\bibitem[\protect\citeauthoryear{Silva and Sakallah}{Silva and
  Sakallah}{1999}]{MarquesSilvaSakallah99}
{\sc Silva, J. P.~M.} {\sc and} {\sc Sakallah, K.~A.} 1999.
\newblock {GRASP:} {A} search algorithm for propositional satisfiability.
\newblock {\em {IEEE} Trans. Computers\/}~{\em 48,\/}~5, 506--521.

\bibitem[\protect\citeauthoryear{Son and Pontelli}{Son and
  Pontelli}{2007}]{SonPontelli07}
{\sc Son, T.~C.} {\sc and} {\sc Pontelli, E.} 2007.
\newblock A constructive semantic characterization of aggregates in answer set
  programming.
\newblock {\em Theory and Practice of Logic Programming\/}~{\em 7,\/}~3,
  355--375.

\bibitem[\protect\citeauthoryear{Syrj\"anen}{Syrj\"anen}{2000}]{lparse}
{\sc Syrj\"anen, T.} 2000.
\newblock {Lparse 1.0 User's Manual}.

\bibitem[\protect\citeauthoryear{Truszczy{\'n}ski}{Truszczy{\'n}ski}{2011}]{Truszczynski11}
{\sc Truszczy{\'n}ski, M.} 2011.
\newblock Trichotomy and dichotomy results on the complexity of reasoning with
  disjunctive logic programs.
\newblock {\em TPLP\/}~{\em 11}, 881--904.

\bibitem[\protect\citeauthoryear{Wetzler, Heule, and {Hunt Jr.}}{Wetzler
  et~al\mbox{.}}{2014}]{WetzlerHeuleHunt14}
{\sc Wetzler, N.}, {\sc Heule, M.}, {\sc and} {\sc {Hunt Jr.}, W.~A.} 2014.
\newblock {DRAT}-trim: Efficient checking and trimming using expressive clausal
  proofs.
\newblock In {\em {SAT} 2014}. LNCS, vol. 8561. Springer, 422--429.

\end{thebibliography}

\appendix
\longversion
{
\clearpage
\section{Additional Resources}

\appendix

\subsection{Additional Examples}

}

\end{document}

\longversion{%
} 
\longversion{%
}%
